\begin{document}

\toappear{ISSAC'16, July 19--22, 2016, Waterloo, ON, Canada. \\
  ACM ISBN.
  DOI: http://dx.doi.org/10.1145/2930889.2930936}

\CopyrightYear{2016}
\setcopyright{acmlicensed}
\conferenceinfo{ISSAC'16,}{July 19--22, 2016, Waterloo, ON, Canada}
\isbn{978-1-4503-4380-0/16/07}\acmPrice{\$15.00}
\doi{http://dx.doi.org/10.1145/2930889.2930936}

\title{Fast Computation of Shifted Popov Forms of Polynomial Matrices via
Systems of Modular Polynomial Equations}

\numberofauthors{1}
\author{
\alignauthor
Vincent Neiger \\
\affaddr{ENS de Lyon \\
Laboratoire LIP, CNRS, Inria, UCBL, U. Lyon}
\email{vincent.neiger@ens-lyon.fr}
}

\renewcommand\leq\leqslant
\newcommand{\storeArg}{} 
\newcounter{notationCounter}
\newcommand{\bigO}[1]{\mathcal{O}(#1)} 
\newcommand{\bigOPar}[1]{\mathcal{O}\left(#1\right)} 
\newcommand{\softO}[1]{\widetilde{\mathcal{O}}(#1)} 
\newcommand{\polmultime}[1]{\mathsf{M}(#1)}
\newcommand{\expmatmul}{\omega} 
\newcommand{\algoname}[1]{{\normalfont\textsc{#1}}}
\renewcommand{\ge}{\geqslant} 
\renewcommand{\le}{\leqslant} 
\newcommand{\bzero}{\mathbf{0}} 
\newcommand{\Zint}[1]{\{0,\ldots,#1-1\}} 
\newcommand{\ZZ}{\mathbb{Z}} 
\newcommand{\NN}{\mathbb{N}} 
\newcommand{\ZZp}{\mathbb{Z}_{> 0}} 
\newcommand{\NNp}{\mathbb{N}_{> 0}} 
\newcommand{\var}{X} 
\newcommand{\field}{\mathbb{K}} 
\newcommand{\vecSpace}{\mathfrak{E}} 
\newcommand{\polSpace}{\mathfrak{F}} 
\newcommand{\polRing}{\field[\var]} 
\newcommand{\matSpace}[1][\rdim]{\renewcommand\storeArg{#1}\matSpaceAux} 
\newcommand{\polMatSpace}[1][\rdim]{\renewcommand\storeArg{#1}\polMatSpaceAux} 
\newcommand{\matSpaceAux}[1][\storeArg]{\field^{\storeArg \times #1}} 
\newcommand{\polMatSpaceAux}[1][\storeArg]{\polRing^{\storeArg \times #1}} 
\newcommand{\row}[1]{\mathbf{\MakeLowercase{#1}}} 
\newcommand{\col}[1]{\mathbf{\MakeLowercase{#1}}} 
\newcommand{\mat}[1]{\mathbf{\MakeUppercase{#1}}} 
\newcommand{\matCoeff}[1]{\MakeLowercase{#1}} 
\newcommand{\vecc}[1]{\mathbf{#1}} 
\newcommand{\trsp}[1]{#1^\mathsf{T}} 
\newcommand{\rdim}{m} 
\newcommand{\cdim}{n} 
\newcommand{\maxDeg}{\delta}
\newcommand{\spol}[1][1]{\setcounter{notationCounter}{#1}\addtocounter{notationCounter}{15} \Alph{notationCounter}} 
\newcommand{\smat}[1][1]{\setcounter{notationCounter}{#1}\mat{\Alph{notationCounter}}} 
\newcommand{\spolmat}[1][1]{\setcounter{notationCounter}{#1}\mathbf{\Alph{notationCounter}}} 
\newcommand{\matrow}[2]{{#1}_{#2,*}} 
\newcommand{\matcol}[2]{{#1}_{*,#2}} 
\newcommand{\diag}[1]{\,\mathrm{diag}(#1)}  
\newcommand{\idMat}[1][\rdim]{\mat{I}_{#1}} 
\newcommand{\leadingMat}[2][\unishift]{\mathrm{lm}_{#1}(#2)} 
\newcommand{\shift}[2][s]{#1_{#2}} 
\newcommand{\shifts}[1][s]{\mathbf{#1}} 
\newcommand{\sshifts}[1][\shifts]{|#1|} 
\newcommand{\shiftSpace}[1][\rdim]{\ZZ^{#1}} 
\newcommand{\unishift}{\mathbf{0}} 
\newcommand{\rdeg}[2][]{\mathrm{rdeg}_{{#1}}(#2)} 
\newcommand{\cdeg}[2][]{\mathrm{cdeg}_{{#1}}(#2)} 
\newcommand{\order}{\sigma} 
\newcommand{\evMat}{\mat{E}} 
\newcommand{\mul}{\cdot} 
\newcommand{\mulmat}{\mat{M}} 
\newcommand{\rowvec}[1][1]{\setcounter{notationCounter}{#1}\addtocounter{notationCounter}{21} \row{\alph{notationCounter}}} 
\newcommand{\minDeg}{\delta}
\newcommand{\minDegs}{\boldsymbol{\delta}}
\newcommand{\sumVec}[1]{|#1|} 
\newtheorem{pbm}{Problem}
\newtheorem{dfn}{Definition}[section]
\newtheorem{thm}[dfn]{Theorem}
\newtheorem{cor}[dfn]{Corollary}
\newtheorem{prop}[dfn]{Proposition}
\newtheorem{lem}[dfn]{Lemma}
\newtheorem{algo}{Algorithm}

\newcommand{\expandMat}{\mathcal{E}}
\newcommand{\quoExp}{\alpha}
\newcommand{\remExp}{\beta}
\newcommand{\expand}[1]{\widetilde{#1}}
\newcommand{\degExp}{ \delta }

\newcommand{\modulus}[1][m]{\mathfrak{#1}}
\newcommand{\Modulus}{\mathfrak{M}}
\newcommand{\modSpace}{\polRing_{\neq 0}}
\newcommand{\degMod}{\order}
\newcommand{\nbeq}{\cdim} 
\newcommand{\nbun}{\rdim} 
\newcommand{\sys}{\mat{F}} 
\newcommand{\sysSpace}{\polMatSpace[\nbun][\nbeq]} 
\newcommand{\sol}{\row{p}} 
\newcommand{\solSpace}{\polMatSpace[1][\nbun]} 
\newcommand{\msb}{\mat{P}} 
\newcommand{\msbSpace}{\polMatSpace[\nbun]} 
\newcommand{\popov}{\mat{P}} 
\newcommand{\popovSpace}{\polMatSpace[\rdim]} 
\newcommand{\smith}{\mat{S}} 
\newcommand{\piv}{\pi} 
\newcommand{\degLin}{\delta}
\newcommand{\degLins}{\boldsymbol{\delta}}
\newcommand{\degDet}{\sigma(\mat{A})} 
\newcommand{\any}{\ast}
\newcommand{\anyMat}{\boldsymbol{\ast}}
\newcommand{\colParLin}[2]{\mathcal{L}^{\mathrm{c}}_{#1}(#2)}
\newcommand{\rowParLin}[2]{\mathcal{L}^{\mathrm{r}}_{#1}(#2)}
\newcommand{\pivdeg}[2][]{\pi\mathrm{deg}_{{#1}}(#2)} 
\newcommand{\subVec}[3]{#1_{[#2:#3]}}  
\newcommand{\subMat}[5]{#1_{[#2:#3,#4:#5]}}  
\newcommand{\amp}[1][\shifts]{\mathrm{amp}(#1)}
\newcommand{\perm}[1][\shifts]{\boldsymbol{\pi}^{#1}}

	\maketitle

\begin{abstract}
  We give a Las Vegas algorithm which computes the shifted Popov form of an
  $\rdim \times \rdim$ nonsingular polynomial matrix of degree $d$ in expected
  $\softO{\rdim^\expmatmul d}$ field operations, where $\expmatmul$ is the
  exponent of matrix multiplication and $\softO{\cdot}$ indicates that
  logarithmic factors are omitted. This is the first algorithm in
  $\softO{\rdim^\expmatmul d}$ for shifted row reduction with arbitrary shifts.

  Using partial linearization, we reduce the problem to the case $d \le \lceil
  \sigma/\rdim \rceil$ where $\sigma$ is the generic determinant bound, with
  $\sigma / \rdim$ bounded from above by both the average row degree and the
  average column degree of the matrix. The cost above becomes
  $\softO{\rdim^\expmatmul \lceil \sigma/\rdim \rceil}$, improving upon the
  cost of the fastest previously known algorithm for row reduction, which is
  deterministic. 

  Our algorithm first builds a system of modular equations whose solution set
  is the row space of the input matrix, and then finds the basis in shifted
  Popov form of this set. We give a deterministic algorithm for this second
  step supporting arbitrary moduli in $\softO{\nbun^{\expmatmul-1} \degMod}$
  field operations, where $\nbun$ is the number of unknowns and $\degMod$ is
  the sum of the degrees of the moduli.  This extends previous results with the
  same cost bound in the specific cases of order basis computation and M-Pad\'e
  approximation, in which the moduli are products of known linear factors.
\end{abstract}

%
%

\vspace{-0.15cm}
\keywords{Shifted Popov form; polynomial matrices; row reduction; Hermite form; system of modular equations.}

\section{Introduction}
\label{sec:intro}

In this paper, we consider two problems of linear algebra over the ring
$\polRing$ of univariate polynomials, for some field $\field$: computing the
shifted Popov form of a matrix, and solving systems of modular equations.

\subsection{Shifted Popov form}
\label{subsec:popov}

A polynomial matrix $\mat{P}$ is row reduced~\cite[Section~6.3.2]{Kailath80} if
its rows have some type of minimal degree (we give precise definitions below).
Besides, if $\mat{P}$ satisfies an additional normalization property, then it
is said to be in Popov form~\cite[Section~6.7.2]{Kailath80}. Given a matrix
$\mat{A}$, the efficient computation of a (row) reduced form of $\mat{A}$ and
of the Popov form of $\mat{A}$ has received a lot of attention
recently~\cite{GiJeVi03,SarSto11,GuSaStVa12}.

In many applications one rather considers the degrees of the rows of $\mat{P}$
shifted by some integers which specify degree weights on the columns of
$\mat{P}$, for example in list-decoding
algorithms~\cite{Alekhnovich05,Busse08}, robust Private Information
Retrieval~\cite{DeGoHe12}, and more generally in polynomial versions of the
Coppersmith method~\cite{CohHen12,CohHen15}. A well-known specific shifted
Popov form is the Hermite form; there has been recent progress on its fast
computation~\cite{GupSto11,Gupta11,ZhoLab16}. The case of an arbitrary shift
has been studied in~\cite{BeLaVi06}.

For a \emph{shift} $\shifts = (\shift{1},\ldots,\shift{\cdim}) \in
\shiftSpace[\cdim]$, the \emph{$\shifts$-degree} of $\row{p} =
[p_1,\ldots,p_\cdim] \in \polMatSpace[1][\cdim]$ is $\max_{1\le j\le \cdim}
(\deg(p_j) + \shift{j})$; the \emph{$\shifts$-row degree} of $\mat{P} \in
\polMatSpace[\rdim][\cdim]$ is $\rdeg[\shifts]{\mat{P}} = (d_1,\ldots,d_\rdim)$
with $d_i$ the $\shifts$-degree of the $i$-th row of $\mat{P}$. Then, the
\emph{$\shifts$-leading matrix} of $\mat{P} = [p_{i,j}]_{ij}$ is the matrix
$\leadingMat[\shifts]{\mat{P}} \in \matSpace[\rdim][\cdim]$ whose entry $(i,j)$
is the coefficient of degree $d_i - \shift{j}$ of $p_{i,j}$.

Now, we assume that $\rdim \le \cdim$ and $\mat{P}$ has full rank. Then,
$\mat{P}$ is said to be \emph{$\shifts$-reduced}~\cite{Kailath80,BeLaVi06} if
$\leadingMat[\shifts]{\mat{P}}$ has full rank. For a full rank $\mat{A} \in
\polMatSpace[\rdim][\cdim]$, an \emph{$\shifts$-reduced form of $\mat{A}$} is
an $\shifts$-reduced matrix $\mat{P}$ whose row space is the same as that of
$\mat{A}$; by row space we mean the $\polRing$-module generated by the rows of
the matrix. Equivalently, $\mat{P}$ is left-unimodularly equivalent to
$\mat{A}$ and the tuple $\rdeg[\shifts]{\mat{P}}$ sorted in nondecreasing order
is lexicographically minimal among the $\shifts$-row degrees of all matrices
left-unimodularly equivalent to $\mat{A}$.

Specific $\shifts$-reduced matrices are those in \emph{$\shifts$-Popov
form}~\cite{Kailath80,BecLab00,BeLaVi06}, as defined below. One interesting
property is that the $\shifts$-Popov form is canonical: there is a unique
$\shifts$-reduced form of $\mat{A}$ which is in $\shifts$-Popov form, called
the \emph{$\shifts$-Popov form of~$\mat{A}$}.

\vspace{-0.1cm}
\begin{dfn}[Pivot]
\label{dfn:pivot}
Let $\row{p} = [p_j]_j \in \polMatSpace[1][\cdim]$ be nonzero and let $\shifts
\in \shiftSpace[\cdim]$. The \emph{$\shifts$-pivot index} of $\row{p}$ is the
largest index $j$ such that $\rdeg[\shifts]{\row{p}} = \deg(p_j) + \shift{j}$.
Then we call $p_j$ and $\deg(p_j)$ the \emph{$\shifts$-pivot entry} and the
\emph{$\shifts$-pivot degree} of $\row{p}$.
\end{dfn}
\vspace{-0.05cm}

We remark that adding a constant to the entries of $\shifts$ does not change
the notion of $\shifts$-pivot. For example, we will sometimes assume
$\min(\shifts)=0$ without loss of generality.

\vspace{-0.05cm}
\begin{dfn}[Shifted Popov form]
\label{dfn:popov}
Let $\rdim \le \cdim$, let $\mat{P} \in \polMatSpace[\rdim][\cdim]$ be
full rank, and let $\shifts \in \shiftSpace[\cdim]$. Then, $\mat{P}$ is said to
be in \emph{$\shifts$-Popov form} if the $\shifts$-pivot indices of its rows
are strictly increasing, the corresponding $\shifts$-pivot entries are monic,
and in each column of $\mat{P}$ which contains a pivot the nonpivot entries
have degree less than the pivot entry.

In this case, the \emph{$\shifts$-pivot degree} of $\mat{P}$ is $\minDegs =
(\minDeg_1,\ldots,\minDeg_\rdim) \in \NN^\rdim$, with $\minDeg_i$ the
$\shifts$-pivot degree of the $i$-th row of $\mat{P}$.
\end{dfn}

Here, although we will encounter Popov forms of rectangular matrices in
intermediate nullspace computations, our main focus is on computing shifted
Popov forms of \emph{square nonsingular matrices}. For the general case,
studied in~\cite{BeLaVi06}, a fast solution would require further developments.
A square matrix in $\shifts$-Popov form has its $\shifts$-pivot entries on the
diagonal, and its $\shifts$-pivot degree is the tuple of degrees of its
diagonal entries and coincides with its column degree.

\begin{center}
\fbox{ \begin{minipage}{7.8cm}
\begin{pbm}[Shifted Popov normal form]
  \label{pbm:popov}

\begin{tabular}{p{1cm}p{5.5cm}}
\emph{Input:} &
   the base field $\field$,
   a nonsingular matrix $\mat{A} \in \polMatSpace[\rdim]$,
   a shift $\shifts\in\shiftSpace$. \\
\emph{Output:} &
  the $\shifts$-Popov form of $\mat{A}$.
\end{tabular}
\end{pbm}
\end{minipage}
}
\end{center}

Two well-known specific cases are the Popov form~\cite{Popov72,Kailath80} for
the \emph{uniform} shift $\shifts=\unishift$, and the Hermite
form~\cite{Hermite1851,Kailath80} for the shift $\shifts[h] = (0,\delta, 2
\delta,\ldots,(\rdim-1)\delta) \in \NN^\rdim$ with $\delta=\rdim
\deg(\mat{A})$~\cite[Lemma~2.6]{BeLaVi06}. For a broader perspective on shifted
reduced forms, we refer the reader to~\cite{BeLaVi06}.

\smallskip
For such problems involving $\rdim \times \rdim$ matrices of degree $d$, one
often wishes to obtain a cost bound similar to that of polynomial matrix
multiplication in the same dimensions: $\softO{\rdim^\expmatmul d}$ operations
in $\field$. Here, $\expmatmul$ is so that we can multiply $\rdim \times \rdim$
matrices over a commutative ring in $\bigO{\rdim^\expmatmul}$ operations in
that ring, the best known bound being $\expmatmul <
2.38$~\cite{CopWin90,LeGall14}. For example, one can compute
$\unishift$-reduced~\cite{GiJeVi03,GuSaStVa12},
$\unishift$-Popov~\cite{SarSto11}, and Hermite~\cite{Gupta11,ZhoLab16} forms of
$\rdim \times \rdim$ nonsingular matrices of degree $d$ in
$\softO{\rdim^\expmatmul d}$ field operations.

Nevertheless, $d$ may be significantly larger than the average degree of the
entries of the matrix, in which case the cost $\softO{\rdim^\expmatmul d}$
seems unsatisfactory. Recently, for the computation of order
bases~\cite{Storjohann06,ZhoLab12}, nullspace bases~\cite{ZhLaSt12},
interpolation bases~\cite{JeNeScVi15,JeNeScVi16}, and matrix
inversion~\cite{ZhLaSt15}, fast algorithms do take into account some types of
average degrees of the matrices rather than their degree. Here, in particular,
we achieve a similar improvement for the computation of shifted Popov forms of
a matrix.

Given $\mat{A} = [a_{i,j}]_{ij} \in \polMatSpace[\rdim]$, we denote by
$\degDet$ the \emph{generic bound for
  $\deg(\det(\mat{A}))$}~\cite[Section~6]{GuSaStVa12}, that is,
\begin{equation}
  \label{eqn:degDet}
  \degDet = \max_{\pi \in S_\rdim} \sum_{1 \le i\le \rdim} \overline{\deg}(a_{i,\pi_i})
\end{equation}
where $S_\rdim$ is the set of permutations of $\{1,\ldots,\rdim\}$, and
$\overline{\deg}(p)$ is defined over $\polRing$ as $\overline{\deg}(0) = 0$ and
$\overline{\deg}(p) = \deg(p)$ for $p\neq 0$. We have $\deg(\det(\mat{A})) \le
\degDet \le \rdim \deg(\mat{A})$, and $\degDet \le
\min(\sumVec{\rdeg{\mat{A}}}, \sumVec{\cdeg{\mat{A}}})$ with
$\sumVec{\rdeg{\mat{A}}}$ and $\sumVec{\cdeg{\mat{A}}}$ the sums of the row and
column degrees of $\mat{A}$. We note that $\degDet$ can be substantially
smaller than $\sumVec{\rdeg{\mat{A}}}$ and $\sumVec{\cdeg{\mat{A}}}$, for
example if $\mat{A}$ has one row and one column of uniformly large degree and
other entries of low degree.

\begin{thm}
  \label{thm:popov}
  There is a Las Vegas randomized algorithm which solves
  Problem~\ref{pbm:popov} in expected $\softO{\rdim^\expmatmul
  \lceil\degDet/\rdim\rceil} \linebreak[0] \subseteq \softO{\rdim^\expmatmul
    \deg(\mat{A})}$ field operations.
\end{thm}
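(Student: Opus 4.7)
The plan is to reduce Problem~\ref{pbm:popov} to the system-of-modular-equations problem mentioned in the abstract, whose deterministic solution runs in $\softO{\nbun^{\expmatmul-1}\degMod}$ operations. The reduction goes through the Smith form. Indeed, if $\mat{U}\mat{A}\mat{V} = \mat{S} = \diag{s_1,\ldots,s_\rdim}$ is a Smith decomposition of $\mat{A}$ with $\mat{U}$ and $\mat{V}$ unimodular, then the row space of $\mat{A}$ equals that of $\mat{U}\mat{A} = \mat{S}\mat{V}^{-1}$, so a row vector $\sol \in \polMatSpace[1][\rdim]$ lies in this row space if and only if the $i$-th entry of $\sol\mat{V}$ is divisible by $s_i$ for every $i$. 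This exhibits the row space as the solution set of a system of $\rdim$ modular equations in $\rdim$ unknowns with total modulus degree $\deg(\det(\mat{A})) \le \degDet$. Applying the second algorithm with $\nbun = \rdim$ and $\degMod \le \degDet$ produces the $\shifts$-Popov basis of this solution set, which by the uniqueness in Definition~\ref{dfn:popov} is exactly the $\shifts$-Popov form of $\mat{A}$, in $\softO{\rdim^{\expmatmul-1}\degDet} = \softO{\rdim^\expmatmul \lceil\degDet/\rdim\rceil}$ field operations.

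It remains to produce $\mat{S}$ and $\mat{V}$ within the same budget. Applying a Las Vegas Smith-form algorithm directly to $\mat{A}$ costs about $\softO{\rdim^\expmatmul \deg(\mat{A})}$, which can be much larger than $\softO{\rdim^\expmatmul \lceil\degDet/\rdim\rceil}$ when $\degDet \ll \rdim \deg(\mat{A})$. To close this gap I would apply partial linearization in the spirit of~\cite{GuSaStVa12}: replace $\mat{A}$ by a related nonsingular matrix $\expand{\mat{A}}$ whose dimension is at most a constant factor larger, whose degree is bounded by $\lceil\degDet/\rdim\rceil$, and from whose row space one can recover that of $\mat{A}$ (and hence its invariant factors together with the relevant transformation) in negligible additional work. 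Running the Las Vegas Smith-form algorithm on $\expand{\mat{A}}$ then fits the target expected cost $\softO{\rdim^\expmatmul \lceil\degDet/\rdim\rceil}$, and combining this with the reduction above gives Theorem~\ref{thm:popov}; the inclusion $\softO{\rdim^\expmatmul\lceil\degDet/\rdim\rceil} \subseteq \softO{\rdim^\expmatmul \deg(\mat{A})}$ follows from $\degDet \le \rdim\deg(\mat{A})$.

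The main obstacle lies in the bookkeeping around partial linearization, not in the cost analysis itself. The shift $\shifts$ is defined columnwise on the original problem, so when the matrix is linearized the shift must be extended by padding with sufficiently large weights so that the newly introduced coordinates are forced into trivial pivot positions; one must then verify that projecting away those coordinates from the $\shifts$-Popov basis of the extended row space yields the $\shifts$-Popov form of $\mat{A}$. A secondary care is to check that the moduli $s_1,\ldots,s_\rdim$ produced by the Smith-form step are delivered in a format compatible with the second algorithm, which is exactly what the abstract's emphasis on \emph{arbitrary} moduli (as opposed to products of known linear factors) is designed to support. Once these points are settled, the three-step pipeline — partial linearization, Las Vegas Smith decomposition, and deterministic solution of the resulting modular system — delivers the expected cost announced in the theorem.
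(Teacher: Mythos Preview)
Your proposal is correct and follows essentially the same approach as the paper: partial linearization \`a la~\cite{GuSaStVa12} to reduce to degree $\lceil\degDet/\rdim\rceil$ with an appropriately padded shift (the paper's Proposition~3.2 and Lemma~3.3), then a Las Vegas Smith decomposition to recast the row space as the solution set of a modular system (Lemma~3.1), and finally the deterministic $\softO{\nbun^{\expmatmul-1}\degMod}$ solver of Theorem~\ref{thm:modsys}. The only wobble is that at one point you speak of recovering the Smith data of $\mat{A}$ from $\expand{\mat{A}}$ and then solving the modular system for $\mat{A}$, whereas later (and in the paper) one simply computes the $\shifts[u]$-Popov form of $\expand{\mat{A}}$ end-to-end and reads off the $\shifts$-Popov form of $\mat{A}$ as its principal $\rdim\times\rdim$ block; the latter is the cleaner route.
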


The ceiling function indicates that the cost is $\softO{\rdim^\expmatmul}$ when
$\degDet$ is small compared to $\rdim$, in which case $\mat{A}$ has mostly
constant entries. Here we are mainly interested in the case $\rdim \in
\bigO{\degDet}$: the cost bound may be written $\softO{\rdim^{\expmatmul-1}
\degDet}$ and is both in $\softO{\rdim^{\expmatmul-1} \sumVec{\rdeg{\mat{A}}}}$
and $\softO{\rdim^{\expmatmul-1} \sumVec{\cdeg{\mat{A}}}}$.

\begin{table}[h]
  \scriptsize
  \centering
  \begin{tabular}[h]{cp{3.1cm}cc}
    Ref.                 &  Problem             &  Cost bound                                                &                \\ \hline \\[-0.25cm]
    \cite{HafMcCur91}    & Hermite form         &  $\softO{\rdim^4 d}$                                       &                \\
    \cite{StoLab96}      & Hermite form         &  $\softO{\rdim^{\expmatmul+1} d}$                          &                \\
    \cite{Villard96}     & Popov \& Hermite forms &  $\softO{\rdim^{\expmatmul+1} d + (\rdim d)^\expmatmul}$ &                \\
    \cite{Alekhnovich02,Alekhnovich05} 
                         & weak Popov form & $\softO{\rdim^{\expmatmul+1} d}$                                &                \\
    \cite{MulSto03}      & Popov \& Hermite forms & $\bigO{\rdim^3 d^2}$                                     &                \\
    \cite{GiJeVi03}      & $\unishift$-reduction & $\softO{\rdim^\expmatmul d}$                              & $\star$        \\
    \cite{SarSto11}      & Popov form of $\unishift$-reduced & $\softO{\rdim^\expmatmul d}$                  &                \\
    \cite{GupSto11}      & Hermite form & $\softO{\rdim^{\expmatmul} d}$                                     & $\star$        \\
    \cite{GuSaStVa12}    & $\unishift$-reduction & $\softO{\rdim^{\expmatmul} d}$                            &                \\
    \cite{ZhoLab16}      & Hermite form & $\softO{\rdim^{\expmatmul} d}$                                     &                \\
    \cite{GuSaStVa12}$+$\cite{SarSto11}
                         & $\shifts$-Popov form for any $\shifts$ & $\softO{\rdim^{\expmatmul} (d+\mu)}$     &                \\
    \emph{Here}          & $\shifts$-Popov form for any $\shifts$ &
                                                  $\softO{\rdim^{\expmatmul} \lceil \degDet/\rdim \rceil}$   & $\star$
  \end{tabular}
  \caption{Fast algorithms for shifted reduction problems \textmd{($d =
    \deg(\mat{A})$; $\star = $ probabilistic; $\mu = \max(\shifts)-\min(\shifts)$)}.}
    \label{tbl:cost_popov}
\vspace{-0.3cm}
\end{table}

Previous work on fast algorithms related to Problem~\ref{pbm:popov} is
summarized in Table~\ref{tbl:cost_popov}. The fastest known algorithm for the
$\unishift$-Popov form is deterministic and has cost $\softO{\rdim^\expmatmul
d}$ with $d = \deg(\mat{A})$; it first computes a $\unishift$-reduced form of
$\mat{A}$~\cite{GuSaStVa12}, and then its $\unishift$-Popov form via
normalization~\cite{SarSto11}. Obtaining the Hermite form in
$\softO{\rdim^\expmatmul d}$ was first achieved by a probabilistic algorithm
in~\cite{Gupta11}, and then deterministically in~\cite{ZhoLab16}.

For an arbitrary $\shifts$, the algorithm in~\cite{BeLaVi06} is fraction-free
and uses a number of operations that is, depending on $\shifts$, at least
quintic in $\rdim$ and quadratic in $\deg(\mat{A})$.

When $\shifts$ is not uniform there is a folklore solution based on the fact
that $\mat{Q}$ is in $\shifts$-Popov form if and only if $\mat{Q} \mat{D}$ is
in $\unishift$-Popov form, with $\mat{D} =
\diag{X^{\shift{1}},\ldots,X^{\shift{\rdim}}}$ and assuming $\shifts \ge 0$.
Then, this solution computes the $\unishift$-Popov form $\mat{P}$ of $\mat{A}
\mat{D}$ using~\cite{GuSaStVa12,SarSto11} and returns $\mat{P} \mat{D}^{-1}$.
This approach uses $\softO{\rdim^\expmatmul (d + \mu)}$ operations where $\mu =
\max(\shifts) - \min(\shifts)$, which is not satisfactory when $\mu$ is large.
For example, its cost for computing the Hermite form is
$\softO{\rdim^{\expmatmul+2} d}$. This is the worst case since one can assume
without loss of generality that $\mu \in \bigO{\rdim \deg(\det(\mat{A}))}
\subseteq \bigO{\rdim^2 d}$~\cite[Appendix A]{JeNeScVi16}.

Here we obtain, to the best of our knowledge, the best known cost bound
$\softO{\rdim^\expmatmul \lceil\degDet/\rdim\rceil} \subseteq
\softO{\rdim^{\expmatmul} d}$ for an arbitrary shift $\shifts$. This removes
the dependency in $\mu$, which means in some cases a speedup by a factor
$\rdim^2$. Besides, this is also an improvement for both specific cases
$\shifts=\unishift$ and $\shifts=\shifts[h]$ when $\mat{A}$ has unbalanced
degrees.

\smallskip
One of the main difficulties in row reduction algorithms is to control the size
of the manipulated matrices, that is, the number of coefficients from $\field$
needed for their dense representation. A major issue when dealing with
arbitrary shifts is that the size of an $\shifts$-reduced form of $\mat{A}$ may
be beyond our target cost. This is a further motivation for focusing on the
computation of the $\shifts$-Popov form of $\mat{A}$: by definition, the sum of
its column degrees is $\deg(\det(\mat{A}))$, and therefore its size is at most
$\rdim^2 + \rdim \deg(\det(\mat{A}))$, independently of $\shifts$.

Consider for example $\mat{A} = \left[\begin{smallmatrix} \mat{B}_1 & \mat{0}
  \\ \mat{0} & \mat{B}_2 \end{smallmatrix}\right]$ for any $\unishift$-reduced
$\mat{B}_1$ and $\mat{B}_2$ in $\polMatSpace[\rdim]$. Then, taking $\shifts =
(0,\ldots,0,d,\ldots,d)$ with $d > 0$, $\left[\begin{smallmatrix} \mat{B}_1 &
    \mat{0} \\ \mat{C} & \mat{B}_2
\end{smallmatrix}\right]$ is an $\shifts$-reduced form of $\mat{A}$ for any
$\mat{C} \in \polMatSpace[\rdim]$ with $\deg(\mat{C}) \le d$; for some
$\mat{C}$ it has size $\Theta(\rdim^2 d)$, with $d$ arbitrary large
independently of $\deg(\mat{A})$.

Furthermore, the size of the unimodular transformation leading from $\mat{A}$
to $\popov$ may be beyond the target cost, which is why fast algorithms for
$\unishift$-reduction and Hermite form do not directly perform unimodular
transformations on $\mat{A}$ to reduce the degrees of its entries. Instead,
they proceed in two steps: first, they work on $\mat{A}$ to find some equations
which describe its row space, and then they find a basis of solutions to these
equations in $\unishift$-reduced form or Hermite form. We will follow a similar
two-step strategy for an arbitrary shift.

It seems that some new ingredient is needed, since for both $\shifts=\unishift$
and $\shifts=\shifts[h]$ the fastest algorithms use shift-specific properties
at some point of the process: namely, the facts that a $\unishift$-reduced form
of $\mat{A}$ has degree at most $\deg(\mat{A})$ and that the Hermite form of
$\mat{A}$ is triangular.

As in~\cite{GupSto11}, we first compute the Smith form $\smith$ of $\mat{A}$
and partial information on a right unimodular transformation $\mat{V}$; this is
where the probabilistic aspect comes from. This gives a description of the row
space of $\mat{A}$ as the set of row vectors $\row{p} \in
\polMatSpace[1][\rdim]$ such that $\row{p} \mat{V} = \row{q} \smith$ for some
$\row{q}\in \polMatSpace[1][\rdim]$. Since $\smith$ is diagonal, this can be
seen as a system of modular equations: the second step is the fast computation
of a basis of solutions in $\shifts$-Popov form, which is our new ingredient.

\subsection{Systems of modular equations}
\label{subsec:modsys}

Hereafter, $\modSpace$ denotes the set of nonzero polynomials. We fix some
moduli $\Modulus = (\modulus_1,\ldots,\modulus_\nbeq) \in \modSpace^\nbeq$, and
for $\mat{A}, \mat{B} \in \polMatSpace[\nbun][\nbeq]$ we write $\mat{A} =
\mat{B} \bmod \Modulus$ if there exists $\mat{Q} \in
\polMatSpace[\rdim][\cdim]$ such that $\mat{A} = \mat{B} + \mat{Q}
\diag{\Modulus}$. Given $\sys \in \sysSpace$ specifying the equations, we call
\emph{solution for $(\Modulus,\sys)$} any $\sol \in \solSpace$ such that $\sol
\sys = 0 \bmod \Modulus$.

The set of all such $\row{p}$ is a $\polRing$-submodule of $\solSpace$ which
contains $\mathrm{lcm}(\modulus_1,\ldots,\modulus_\nbeq) \solSpace$, and is
thus free of rank $\nbun$ \cite[p.~146]{Lang02}. Then, we represent any basis
of this module as the rows of a matrix $\msb \in \msbSpace$, called a
\emph{solution basis for $(\Modulus,\sys)$}.
Here, for example for the application to Problem~\ref{pbm:popov}, we are
interested in such bases that are $\shifts$-reduced, in which case $\msb$ is
said to be an \emph{$\shifts$-minimal solution basis for $(\Modulus,\sys)$}.
The unique such basis which is in $\shifts$-Popov form is called the
\emph{$\shifts$-Popov solution basis for $(\Modulus,\sys)$}.

\begin{figure}[h]
  \centering
\fbox{ \begin{minipage}{7.8cm}
\begin{pbm}[Minimal solution basis]
  \label{pbm:modsys}

\begin{tabular}{p{0.9cm}p{6.4cm}}
  \emph{Input:} &
    the base field $\field$,
    moduli $\Modulus = (\modulus_1,\ldots,\modulus_\nbeq) \in \modSpace^\nbeq$,
    a matrix $\sys \in \sysSpace$ such that $\deg(\matcol{\sys}{j}) < \deg(\modulus_j)$,
    a shift $\shifts\in\shiftSpace$. \\
  \emph{Output:} &
  an $\shifts$-minimal solution basis for $(\Modulus,\sys)$.
\end{tabular}
\end{pbm}
\end{minipage}
}
\end{figure}

Well-known specific cases of this problem are \emph{Hermite-Pad\'e
approximation} with a single equation modulo some power of $X$, and
\emph{M-Pad\'e approximation}~\cite{Beckermann92,BarBul92} with moduli that are
products of known linear factors. Moreover, an \emph{$\shifts$-order basis for
$\sys$ and $(\order_1,\ldots,\order_\nbeq)$}~\cite{ZhoLab12} is an
$\shifts$-minimal solution basis for $(\Modulus,\sys)$ with $\Modulus =
(X^{\order_1},\ldots,X^{\order_\nbeq})$.

An overview of fast algorithms for Problem~\ref{pbm:modsys} is given in
Table~\ref{tbl:cost_msb}. For M-Pad\'e approximation, and thus in particular
for order basis computation, there is an algorithm to compute the
$\shifts$-Popov solution basis using $\softO{\nbun^{\expmatmul-1} \degMod}$
operations, with $\degMod = \deg(\modulus_1) + \cdots +
\deg(\modulus_\nbeq)$~\cite{JeNeScVi16}.  Here, for $\nbeq \in \bigO{\nbun}$,
we extend this result to arbitrary moduli.

\begin{thm}
  \label{thm:modsys}
Assuming $\nbeq \in \bigO{\nbun}$, there is a deterministic algorithm which
solves Problem~\ref{pbm:modsys} using $\softO{\nbun^{\expmatmul-1} \degMod}$
field operations, with $\order = \deg(\modulus_1) + \cdots +
\deg(\modulus_\nbeq)$, and returns the $\shifts$-Popov solution basis for
$(\Modulus,\sys)$.
\end{thm}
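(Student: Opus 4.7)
The approach extends the divide-and-conquer technique used for M-Pad\'e approximation in~\cite{JeNeScVi16} to the case of arbitrary moduli. For the base case $\nbeq=1$, a single equation $\sol \matcol{\sys}{1} \equiv 0 \bmod \modulus_1$, observe that the solution module is exactly the projection onto the first $\nbun$ coordinates of the left kernel of the $(\nbun+1) \times 1$ polynomial matrix $\left[\begin{smallmatrix} \matcol{\sys}{1} \\ -\modulus_1 \end{smallmatrix}\right]$. With an appropriate extension of $\shifts$ to the extra coordinate, fast shifted kernel basis algorithms recover the $\shifts$-Popov solution basis in $\softO{\nbun^{\expmatmul-1} \deg(\modulus_1)}$ field operations.

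For $\nbeq > 1$, partition the moduli list into two sublists $(\Modulus',\sys')$ and $(\Modulus'',\sys'')$ of balanced total degree $\approx \order/2$. Recursively compute the $\shifts$-Popov solution basis $\msb_1$ for $(\Modulus',\sys')$, let $\boldsymbol{\delta}_1$ denote its $\shifts$-pivot degree, form the reduced residue $\widetilde{\sys}'' := \msb_1 \sys'' \bmod \Modulus''$, and recursively compute the $(\shifts + \boldsymbol{\delta}_1)$-Popov solution basis $\msb_2$ for $(\Modulus'',\widetilde{\sys}'')$; return $\msb := \msb_2 \msb_1$. Correctness follows from an iterated characterization of solutions: $\sol$ solves $(\Modulus,\sys)$ iff it solves $(\Modulus',\sys')$, i.e.\ $\sol = \row{q} \msb_1$ for some $\row{q}$, and $\row{q}$ itself satisfies $\row{q} \widetilde{\sys}'' \equiv 0 \bmod \Modulus''$. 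The shift choice $\shifts + \boldsymbol{\delta}_1$ for the recursive call ensures, via the standard chain property of shifted Popov forms (a small example like $\msb_1 = \left[\begin{smallmatrix} X^2 + 1 & 1 \\ X+1 & X+1 \end{smallmatrix}\right]$ and $\msb_2 = \left[\begin{smallmatrix} X & 1 \\ 1 & X \end{smallmatrix}\right]$ may be verified directly), that $\msb_2 \msb_1$ is already in $\shifts$-Popov form, avoiding an explicit normalization.

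The dominant operations at each level are the two matrix products $\msb_1 \sys''$ (with reduction modulo $\Modulus''$) and $\msb_2 \msb_1$. Since $\det(\msb_1)$ divides the product of the first-half moduli, the column degrees of $\msb_1$ sum to at most $\order/2$, and symmetrically for $\msb_2$; using fast polynomial matrix multiplication that exploits column degrees~\cite{ZhoLab12,ZhLaSt12}, each product costs $\softO{\nbun^{\expmatmul-1} \order}$. Summing over the $\bigO{\log \nbeq}$ recursion levels and invoking $\nbeq \in \bigO{\nbun}$ yields a total cost in $\softO{\nbun^{\expmatmul-1} \order}$. The principal technical obstacle is shift-amplitude control: the shift $\shifts + \boldsymbol{\delta}_1$ passed to the second recursive call may have amplitude considerably larger than that of $\shifts$, which would inflate intermediate matrix sizes and break the target cost bound. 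This is addressed by applying partial linearization before each recursive call so as to reduce to a bounded-amplitude shift, following the technique used for the M-Pad\'e case in~\cite{JeNeScVi16}.
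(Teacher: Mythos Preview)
Your divide-and-conquer skeleton matches the paper's, but there are two genuine gaps.

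\medskip
\textbf{The product $\msb_2\msb_1$ is not in $\shifts$-Popov form in general.} Your ``chain property'' claim is false, and the example you give happens to work by accident. Take $\shifts=(0,0)$,
\[
\msb_1=\begin{bmatrix} X^2 & X\\ 0 & X^2\end{bmatrix},\qquad
\msb_2=\begin{bmatrix} X^3 & 1\\ 0 & X\end{bmatrix}.
\]
Then $\msb_1$ is $\shifts$-Popov with $\boldsymbol{\delta}_1=(2,2)$, and $\msb_2$ is $(\shifts+\boldsymbol{\delta}_1)$-Popov, but
\[
\msb_2\msb_1=\begin{bmatrix} X^5 & X^4+X^2\\ 0 & X^3\end{bmatrix}
\]
has a degree-$4$ entry above a degree-$3$ diagonal pivot in column $2$, so it is only $\shifts$-reduced, not $\shifts$-Popov. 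The paper deals with this exactly as stated in its introduction: it never forms $\msb_2\msb_1$. Instead it reads off the $\shifts$-minimal degree $\boldsymbol{\delta}=\boldsymbol{\delta}_1+\boldsymbol{\delta}_2$ from the two recursive calls and then calls \algoname{KnownDegPolModSys} (Algorithm~\ref{algo:knownmindeg_modsys}), which recomputes $\msb$ directly from $(\Modulus,\sys)$ via an order basis at controlled order, using Lemmas~\ref{lem:mnb_solbas}--\ref{lem:modsys_parlin}. Your cost argument for the product $\msb_2\msb_1$ is therefore beside the point, and the normalization step you are skipping is precisely what the paper's machinery is designed to avoid paying for.

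\medskip
\textbf{The base case $\nbeq=1$ is the hard part, not a black box.} You write that ``fast shifted kernel basis algorithms'' handle it in $\softO{\nbun^{\expmatmul-1}\deg(\modulus_1)}$. No such algorithm was available for an \emph{arbitrary} shift~$\shifts$; this is the main new ingredient of the paper (Section~\ref{subsec:modsys_one}). When $\amp$ is large compared to $\degMod$, neither the nullspace-basis algorithm of~\cite{ZhLaSt12} nor an order basis at order $\bigO{\degMod}$ gives the bound directly. The paper introduces the notion of a \emph{splitting index} and runs a separate divide-and-conquer on the amplitude of $\shifts$: it partitions the shift into sub-tuples of amplitude $\bigO{\degMod}$, uses Lemma~\ref{lem:modsys_one_small} to solve the small-amplitude leaves by a single order basis, and glues the pieces via Lemmas~\ref{lem:one:splitting_index}--\ref{lem:one:obtain_degrees}. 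Your closing remark about ``partial linearization before each recursive call'' does not capture this; partial linearization in the paper is used inside \algoname{KnownDegPolModSys} (Lemma~\ref{lem:modsys_parlin}) and for the residual product (Lemma~\ref{lem:residual}), not to tame the shift in the $\nbeq=1$ kernel problem.
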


We note that Problem~\ref{pbm:modsys} is a minimal interpolation basis
problem~\cite{BecLab00,JeNeScVi15} when the so-called \emph{multiplication
matrix} $\mulmat$ is block diagonal with companion blocks. Indeed, $\row{p}$ is
a solution for $(\Modulus,\sys)$ if and only if $\row{p}$ is an
\emph{interpolant for $(\evMat,\mulmat)$}~\cite[Definition~1.1]{JeNeScVi15},
where $\evMat \in \matSpace[\nbun][\degMod]$ is the concatenation of the
coefficient vectors of the columns of $\sys$ and $\mulmat \in
\matSpace[\degMod]$ is $\diag{\mulmat_1,\ldots,\mulmat_\nbeq}$ with $\mulmat_j$
the companion matrix associated with $\modulus_j$. In this context, the
multiplication $\row{p} \mul \evMat$ defined by $\mulmat$ as
in~\cite{BecLab00,JeNeScVi15} precisely corresponds to $\row{p} \sys \bmod
\Modulus$.

In particular, Theorem~\ref{thm:modsys} follows
from~\cite[Theorem~1.4]{JeNeScVi15} when $\degMod \in \bigO{\nbun}$. If some of
the moduli have small degree, we use this result for base cases of our
recursive algorithm.

\vspace{-0.15cm}
\begin{table}[h]
  { \scriptsize
  \centering
  \begin{tabular}[h]{ccp{1.45cm}p{2.6cm}}
    Ref.      &  Cost bound                         &  Moduli & Particularities \\ \hline \\[-0.25cm]
    \cite{Beckermann92,BarBul92}     &  $\bigO{\nbun^2 \degMod^2}$  &  split  & \\
    \cite{BecLab94}     &  $\bigO{\nbun \degMod^2}$  &  $\modulus_j = X^{\degMod/\nbeq}$  & partial basis \\
    \cite{BecLab94}     &  $\softO{\nbun^\expmatmul \degMod}$  & $\modulus_j = X^{\degMod/\nbeq}$ & \\
    \cite{GiJeVi03}     &  $\softO{\nbun^\expmatmul \degMod/\nbeq}$  & $\modulus_j = X^{\degMod/\nbeq}$ & \\
    \cite{Storjohann06} &  $\softO{\nbun^{\expmatmul} \lceil \degMod/\nbun\rceil}$  & $\modulus_j = X^{\degMod/\nbeq}$ & partial basis, $\sshifts \le \degMod$ \\
    \cite{ZhoLab12}     & $\softO{\nbun^{\expmatmul} \lceil\degMod/\nbun\rceil}$  & $\modulus_j = X^{\degMod/\nbeq}$ & $\sshifts \le \degMod$\\
    \cite{CJNSV15}      & \vtop{\hbox{\strut $\softO{\nbun^{\expmatmul-1} \degMod}$,}\hbox{\strut probabilistic}} & any   & returns a single small degree solution \\
    \cite{JeNeScVi15}   & $\softO{\nbun^{\expmatmul-1} \degMod}$  & split & $\sshifts \le \degMod$ \\
    \cite{JeNeScVi15}   & $\softO{\nbun \degMod^{\expmatmul-1}}$  & any & $\shifts$-Popov, $\degMod \in \bigO{\nbun}$ \\
    \cite{JeNeScVi16}   & $\softO{\nbun^{\expmatmul-1} \degMod}$  & split & $\shifts$-Popov \\
    \emph{Here}   & $\softO{\nbun^{\expmatmul-1} \degMod}$  & any & $\shifts$-Popov
  \end{tabular}
  \caption{Fast algorithms for Problem~\ref{pbm:modsys} \textmd{($\nbeq \in
    \bigO{\nbun}$; \emph{partial basis} $=$ returns small degree rows of an
    $\shifts$-minimal solution basis; \emph{split} $=$ product of known linear
  factors)}.} \label{tbl:cost_msb}
  }
\end{table}
\vspace{-0.15cm}

In the case of M-Pad\'e approximation, knowing the moduli as products of linear
factors leads to rewriting the problem as a minimal interpolation basis
computation with $\mulmat$ in Jordan form~\cite{BecLab00,JeNeScVi15}. Since
$\mulmat$ is upper triangular, one can then rely on recurrence relations to
solve the problem iteratively~\cite{Beckermann92,BarBul92,BecLab94,BecLab00}.
The fast algorithms in~\cite{BecLab94,GiJeVi03,ZhoLab12,JeNeScVi15,JeNeScVi16},
beyond the techniques used to achieve efficiency, are essentially
divide-and-conquer versions of this iterative solution and are thus based on the
same recurrence relations.

However, for arbitrary moduli the matrix $\mulmat$ is not triangular and there
is no such recurrence in general. Then, a natural idea is to relate solution
bases to nullspace bases: Problem~\ref{pbm:modsys} asks to find $\msb$ such
that there is some quotient $\mat{Q}$ with $[\msb | \mat{Q}] \mat{N} = \mat{0}$
for $\mat{N} = \trsp{[\trsp{\sys} | -\diag{\Modulus}]}$.  More precisely,
$[\msb | \mat{Q}]$ can be obtained as a $\shifts[u]$-minimal nullspace basis of
$\mat{N}$ for the shift $\shifts[u] = (\shifts-\min(\shifts),\unishift) \in
\NN^{\nbun+\nbeq}$.

Using recent ingredients from~\cite{GupSto11,JeNeScVi16} outlined in the next
paragraphs, the main remaining difficulty is to deal with this nullspace
problem when $\nbeq = 1$. Here, we give a $\softO{\nbun^{\expmatmul-1}
\degMod}$ algorithm to solve it using its specific properties: $\mat{N}$ is the
column $\trsp{[\trsp{\sys}|\modulus_1]}$ with $\deg(\sys) < \deg(\modulus_1) =
\degMod$, and the last entry of $\shifts[u]$ is $\min(\shifts[u])$. First, when
$\max(\shifts[u]) \in \bigO{\degMod}$ we show that $[\msb | \mat{Q}]$ can be
efficiently obtained as a submatrix of the $\shifts[u]$-Popov order basis for
$\mat{N}$ and order $\bigO{\degMod}$. Then, when $\max(\shifts[u])$ is large
compared to $\degMod$ and assuming $\shifts[u]$ is sorted non-decreasingly,
$\msb$ has a lower block triangular shape. We show how this shape can be
revealed, along with the $\shifts$-pivot degree of $\msb$, using a
divide-and-conquer approach which splits $\shifts[u]$ into two shifts of
amplitude about $\max(\shifts[u])/2$.

Then, for $\nbeq\ge 1$ we use a divide-and-conquer approach on $\nbeq$ which is
classical in such contexts: two solution bases $\msb^{(1)}$ and $\msb^{(2)}$
are computed recursively in shifted Popov form and are multiplied together to
obtain the $\shifts$-minimal solution basis $\msb^{(2)} \msb^{(1)}$ for
$(\Modulus,\sys)$. However this product is usually not in $\shifts$-Popov form
and may have size beyond our target cost. Thus, as in~\cite{JeNeScVi16},
instead of computing $\msb^{(2)} \msb^{(1)}$, we use $\msb^{(2)}$ and
$\msb^{(1)}$ to deduce the $\shifts$-pivot degree of $\msb$.

In both recursions above, we focus on finding the $\shifts$-pivot degree of
$\msb$. Using ideas and results from~\cite{GupSto11,JeNeScVi16}, we show that
this knowledge about the degrees in $\msb$ allows us to complete the
computation of $\msb$ within the target cost.

\section{Fast computation of the shifted Popov solution basis}
\label{sec:modsys}

Hereafter, we call \emph{$\shifts$-minimal degree of $(\Modulus,\sys)$} the
$\shifts$-pivot degree $\minDegs$ of the $\shifts$-Popov solution basis for
$(\Modulus,\sys)$; $\minDegs$ coincides with the column degree of this basis. A
central result for the cost analysis is that $\sumVec{\minDegs} = \minDeg_1 +
\cdots + \minDeg_\rdim$ is at most $\degMod = \deg(\modulus_1) + \cdots +
\deg(\modulus_\nbeq)$. This is classical for M-Pad\'e
approximation~\cite[Theorem~4.1]{BarBul92} and holds for minimal interpolation
bases in general (see for example \cite[Lemma~7.17]{JeNeScVi15}).

\subsection{Solution bases from nullspace bases and fast algorithm for known
minimal degree}
\label{subsec:mnb_solbas_knownmindeg}

This subsection summarizes and slightly extends results from~\cite[Section
3]{GupSto11}. We first show that the $\shifts$-Popov solution basis for
$(\Modulus,\sys)$ is the principal $\nbun \times \nbun$ submatrix of the
$\shifts[u]$-Popov nullspace basis of $\trsp{ [ \trsp{\sys} | \diag{\Modulus} ]
}$ for some $\shifts[u] \in \shiftSpace[\nbun+\nbeq]$.

\vspace{-0.1cm}
\begin{lem}
  \label{lem:mnb_solbas}
  Let $\Modulus = (\modulus_1,\ldots,\modulus_\nbeq) \in \modSpace^\nbeq$,
  $\shifts\in\shiftSpace$, $\sys \in \sysSpace$ with $\deg(\matcol{\sys}{j}) <
  \deg(\modulus_j)$, $\msb \in \msbSpace$, and $\shifts[w] \in
  \shiftSpace[\nbeq]$ be such that $\max(\shifts[w]) \le \min(\shifts)$.  Then,
  $\msb$ is the $\shifts$-Popov solution basis for $(\Modulus,\sys)$ if and
  only if $[ \msb | \mat{Q} ]$ is the $\shifts[u]$-Popov nullspace basis of
  $\trsp{ [ \trsp{\sys} | \diag{\Modulus} ] }$ for some $\mat{Q} \in
  \polMatSpace[\nbun][\nbeq]$ and $\shifts[u] = (\shifts,\shifts[w]) \in
  \shiftSpace[\nbun+\nbeq]$. In this case, $\deg(\mat{Q}) < \deg(\msb)$ and $[
  \msb | \mat{Q} ]$ has $\shifts$-pivot index $(1,2,\ldots,\nbun)$.
\end{lem}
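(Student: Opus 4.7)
The plan is to reduce the equivalence to a single degree bound on $\mat{Q}$, which forces every $\shifts[u]$-pivot of $[\msb|\mat{Q}]$ to lie in its $\msb$ part; at that point the Popov conditions on both sides become literally the same constraint. First I would unfold the nullspace identity $[\msb|\mat{Q}]\trsp{[\trsp{\sys}|\diag{\Modulus}]} = \mat{0}$ columnwise as $\msb\matcol{\sys}{j} = -\modulus_j \matcol{\mat{Q}}{j}$ for each $j$. Since the $\modulus_j$ are nonzero and $\polRing$ is a domain, $\mat{Q}$ is uniquely determined by $\msb$ whenever $\msb\sys \equiv 0 \pmod{\Modulus}$, and the correspondence $\row{p} \mapsto [\row{p}|\row{q}]$ is a $\polRing$-module isomorphism between the solution module of $(\Modulus,\sys)$ and the nullspace of $\trsp{[\trsp{\sys}|\diag{\Modulus}]}$. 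Both modules have rank $\nbun$, so $\msb$ generates the solution module iff $[\msb|\mat{Q}]$ generates the nullspace; this shift-independent equivalence already takes care of the ``basis'' part of the statement.

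The core of the argument is the following degree estimate. From $\matcol{\mat{Q}}{j}\modulus_j = -\msb\matcol{\sys}{j}$ and the hypothesis $\deg(\matcol{\sys}{j}) < \deg(\modulus_j)$, for each $i,j$ one has
\[
\deg(\mat{Q}_{i,j}) + \shift[w]{j} \le \max_k\bigl(\deg(\msb_{i,k}) + \shift{k} + (\deg(\sys_{k,j}) - \deg(\modulus_j)) + (\shift[w]{j} - \shift{k})\bigr).
\]
Using $\shift[w]{j} \le \max(\shifts[w]) \le \min(\shifts) \le \shift{k}$ and $\deg(\sys_{k,j}) < \deg(\modulus_j)$, the two bracketed increments are each nonpositive and sum to at most $-1$, so
\[
\deg(\mat{Q}_{i,j}) + \shift[w]{j} < \max_k\bigl(\deg(\msb_{i,k}) + \shift{k}\bigr).
\]
This yields $\deg(\mat{Q}) < \deg(\msb)$, shows that the $\shifts[u]$-degree of the $i$-th row of $[\msb|\mat{Q}]$ equals the $\shifts$-degree of the $i$-th row of $\msb$, and forces every $\shifts[u]$-pivot of $[\msb|\mat{Q}]$ into one of the first $\nbun$ columns.

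The Popov equivalence then becomes a direct comparison of Definition~\ref{dfn:popov} on the two matrices. By the above, the $\shifts[u]$-pivot index of row $i$ of $[\msb|\mat{Q}]$ coincides with the $\shifts$-pivot index of row $i$ of $\msb$, and the corresponding pivot entries are identical; strictly increasing indices lying in $\{1,\ldots,\nbun\}$ must equal $(1,\ldots,\nbun)$, which gives the final claim of the lemma. Monicity of pivots transfers verbatim, and the ``non-pivot entries in each pivot column have smaller degree than the pivot'' condition is vacuous on the last $\nbeq$ columns (no pivots there) and identical on the first $\nbun$, where the two matrices agree. For the ``if'' direction I would conclude by noting that a square matrix in $\shifts$-Popov form with pivots on the diagonal has a lower-triangular $\shifts$-leading matrix with $1$'s on the diagonal, hence is $\shifts$-reduced and nonsingular, certifying that $\msb$ is a genuine solution basis. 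The main obstacle I anticipate is keeping track of the shifted degrees cleanly; once the displayed inequality is in hand, the rest is essentially bookkeeping.
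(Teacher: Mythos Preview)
Your proposal is correct and follows essentially the same route as the paper: both first identify the solution module with the nullspace via the map $\row{p}\mapsto[\row{p}\mid\row{q}]$, then use $\deg(\matcol{\sys}{j})<\deg(\modulus_j)$ together with $\max(\shifts[w])\le\min(\shifts)$ to obtain $\rdeg[{\shifts[w]}]{\row{q}}<\rdeg[\shifts]{\row{p}}$, which pushes every $\shifts[u]$-pivot into the first $\nbun$ columns. The only cosmetic difference is that the paper packages this last step as $\leadingMat[{\shifts[u]}]{[\mat{P}\mid\mat{Q}]}=[\leadingMat[\shifts]{\mat{P}}\mid\mat{0}]$, whereas you verify the Popov conditions of Definition~\ref{dfn:popov} one by one; the content is the same.
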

\begin{proof}
Let $\mat{N} = \trsp{[ \trsp{\sys} | \diag{\Modulus} ]}$. It is easily verified
that $\msb$ is a solution basis for $(\Modulus,\sys)$ if and only if there is
some $\mat{Q}\in \polMatSpace[\nbun][\nbeq]$ such that $[\msb | \mat{Q}]$ is a
nullspace basis of $\mat{N}$.

Now, having $\deg(\matcol{\sys}{j}) < \deg(\modulus_j)$ implies that any
$[\row{p} | \row{q}] \in \polMatSpace[1][(\nbun+\nbeq)]$ in the nullspace of
$\mat{N}$ satisfies $\deg(\row{q}) < \deg(\row{p})$, and since
$\max(\shifts[w]) \le \min(\shifts)$ we get $\rdeg[{\shifts[w]}]{\row{q}} <
\rdeg[\shifts]{\row{p}}$. In particular, for any matrix $[\mat{P} | \mat{Q}]
\in \polMatSpace[\nbun][(\nbun+\nbeq)]$ such that $[\mat{P} | \mat{Q}] \mat{N}
= 0$, we have $\leadingMat[{\shifts[u]}]{[\mat{P}|\mat{Q}]} =
[\leadingMat[\shifts]{\mat{P}} | \mat{0}]$. This implies that $\mat{P}$ is in
$\shifts$-Popov form if and only if $[\mat{P} | \mat{Q}]$ is in
$\shifts[u]$-Popov form with $\shifts$-pivot index $(1,\ldots,\nbun)$.
\vspace{-0.1cm}
\end{proof}

We now show that, when we have \emph{a priori} knowledge about the
$\shifts$-pivot entries of a $\shifts$-Popov nullspace basis, it can be
computed efficiently via an $\shifts$-Popov order basis.

\vspace{-0.1cm}
\begin{lem}
  \label{lem:mnb_known_mindeg}
  Let $\shifts \in \shiftSpace[\rdim+\cdim]$ and let $\mat{N} \in
  \polMatSpace[(\rdim+\cdim)][\cdim]$ be of full rank. Let $\mat{B} \in
  \polMatSpace[\rdim][(\rdim+\cdim)]$ be the $\shifts$-Popov nullspace basis
  for $\mat{N}$, $(\piv_1,\ldots,\piv_\rdim)$ be its $\shifts$-pivot index,
  $(\minDeg_1,\ldots,\minDeg_\rdim)$ be its $\shifts$-pivot degree, and
  $\maxDeg \ge \deg(\mat{B})$ be a degree bound. Then, let $\shifts[u] =
  (\shift[u]{1}, \ldots, \shift[u]{\rdim+\cdim}) \in \ZZ_{\le 0}^{\rdim+\cdim}$
  with 
  \vspace{-0.2cm}
  \[ \shift[u]{j} = \begin{cases} 
      -\maxDeg -1 & \;\; \text{if } j \not\in \{\piv_1,\ldots,\piv_\rdim\}, \\ 
      -\minDeg_i & \;\; \text{if } j = \piv_i.
    \end{cases}
  \vspace{-0.2cm}
  \]
  Writing $(\order_1,\ldots,\order_\cdim)$ for the column degree of $\mat{N}$,
  let $\tau_j = \order_j + \maxDeg +1$ for $1 \le j \le \cdim$ and let
  $\mat{A}$ be the $\shifts[u]$-Popov order basis for $\mat{N}$ and
  $(\tau_1,\ldots,\tau_\cdim)$. Then, $\mat{B}$ is the submatrix of $\mat{A}$
  formed by its rows at indices $\{\piv_1,\ldots,\piv_\rdim\}$.
\end{lem}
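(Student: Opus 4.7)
The plan is to show $\matrow{\mat{A}}{\piv_i}=\matrow{\mat{B}}{i}$ for each $i$ by appealing to uniqueness of the $\shifts[u]$-Popov basis of the order module.

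As a first step, I would check that the rows of $\mat{B}$ belong to the order basis module (immediate from $\mat{B}\mat{N}=0$) and analyze $\rdeg[{\shifts[u]}]{\matrow{\mat{B}}{i}}$ by cases on the column. At $\piv_i$ the monic pivot entry of degree $\minDeg_i$ gives shifted degree $0$; at another pivot column $\piv_k$ the $\shifts$-Popov normalization of $\mat{B}$ forces the entry to have degree strictly below $\minDeg_k$, hence shifted degree strictly below $0$; at a non-pivot column the entry has degree at most $\maxDeg$ under shift $-\maxDeg-1$, so shifted degree at most $-1$. Thus $\mat{B}$ is in $\shifts[u]$-Popov form with pivots $(\piv_1,\ldots,\piv_\rdim)$ and pivot degrees $(\minDeg_1,\ldots,\minDeg_\rdim)$; by uniqueness, $\mat{B}$ is also the $\shifts[u]$-Popov nullspace basis of $\mat{N}$, so via the predictable degree property every nullspace element has $\shifts[u]$-pivot in $\{\piv_1,\ldots,\piv_\rdim\}$, and none has $\shifts[u]$-pivot degree strictly less than $\minDeg_i$ at column $\piv_i$.

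Next I would pin down the $\shifts[u]$-pivot degrees $\alpha_k$ of $\mat{A}$. The key observation is that if $\rdeg[{\shifts[u]}]{\matrow{\mat{A}}{k}}\le -1$, then $\deg((\mat{A})_{k,j})\le -\shift[u]{j}-1\le\maxDeg$ for every $j$, so each coordinate of $\matrow{\mat{A}}{k}\mat{N}$ has degree at most $\maxDeg+\order_j<\tau_j$ and vanishes by the order condition; hence $\matrow{\mat{A}}{k}$ lies in the exact nullspace. Combined with the conclusions of the first step, this forces $\alpha_{\piv_i}=\minDeg_i$ (a strictly smaller $\alpha_{\piv_i}$ would make $\matrow{\mat{A}}{\piv_i}$ a nullspace element with $\shifts[u]$-pivot degree smaller than $\minDeg_i$ at column $\piv_i$) and $\alpha_k\ge\maxDeg+1$ for each $k\notin\{\piv_1,\ldots,\piv_\rdim\}$ (any $\alpha_k\le\maxDeg$ would make $\matrow{\mat{A}}{k}$ a nullspace element with $\shifts[u]$-pivot at the forbidden column $k$).

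Finally, let $\mat{A}'$ be the matrix obtained from $\mat{A}$ by replacing its $\piv_i$-th row with $\matrow{\mat{B}}{i}$. The data just obtained shows that $\mat{A}'$ is still in $\shifts[u]$-Popov form: the new row has the correct pivot position and degree ($\piv_i$ and $\minDeg_i=\alpha_{\piv_i}$), and its entries at the other pivot columns satisfy the required strict degree inequalities (degree $<\minDeg_l=\alpha_{\piv_l}$ at $k=\piv_l$, $l\ne i$, from the $\shifts$-Popov form of $\mat{B}$; degree $\le\maxDeg<\alpha_k$ at non-pivot $k$). Writing $\matrow{\mat{B}}{i}=\sum_k c'_k\matrow{\mat{A}}{k}$ and comparing leading coefficients at column $\piv_i$ yields $c'_{\piv_i}=1$, so $\matrow{\mat{A}}{\piv_i}$ lies in the row span of $\mat{A}'$ and $\mat{A}'$ remains a basis of the order module. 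Uniqueness of the $\shifts[u]$-Popov basis of this module then gives $\mat{A}'=\mat{A}$, which is exactly $\matrow{\mat{A}}{\piv_i}=\matrow{\mat{B}}{i}$. The main obstacle is the middle step, where the tailored shift $\shifts[u]$ must be used to extract the pivot profile of $\mat{A}$; once this is done, the replacement argument and Popov uniqueness complete the proof cleanly.
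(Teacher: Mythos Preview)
Your proof is correct and follows essentially the same route as the paper: both first verify that $\mat{B}$ is in $\shifts[u]$-Popov form with $\rdeg[{\shifts[u]}]{\mat{B}}=\unishift$, then use the key degree observation that any row of $\mat{A}$ with $\rdeg[{\shifts[u]}]<0$ must lie in the exact nullspace (since its degree is then at most $\maxDeg$ and the product with $\mat{N}$ has degree below the order $\tau_j$), and finally build a hybrid of $\mat{A}$ and $\mat{B}$ that is in $\shifts[u]$-Popov form and conclude by uniqueness. The only differences are organizational: the paper replaces all pivot rows at once (defining a single matrix $\mat{C}$) and invokes minimality of $\mat{A}$ to obtain $\rdeg[{\shifts[u]}]{\mat{A}}=\rdeg[{\shifts[u]}]{\mat{C}}$ directly, whereas you replace one row at a time and explicitly pin down each pivot degree $\alpha_k$; your verification that $\mat{A}'$ remains a basis via the coefficient $c'_{\piv_i}=1$ is more explicit than the paper's. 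One small point you leave implicit is the inequality $\alpha_{\piv_i}\le\minDeg_i$ (your parenthetical only rules out $\alpha_{\piv_i}<\minDeg_i$); this follows immediately from the Popov basis property of $\mat{A}$ applied to $\matrow{\mat{B}}{i}$, which lies in the order module with pivot $\piv_i$ of degree $\minDeg_i$.
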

\begin{proof}
  First, $\mat{B}$ is in $\shifts[u]$-Popov form with
  ${\rdeg[{\shifts[u]}]{\mat{B}} = \unishift}$. Define $\mat{C} \in
  \polMatSpace[(\rdim+\cdim)]$ whose $i$-th row is $\matrow{\mat{B}}{j}$ if $i =
  \piv_j$ and $\matrow{\mat{A}}{i}$ if $i \not\in
  \{\piv_1,\ldots,\piv_\rdim\}$:
  we want to prove $\mat{C} = \mat{A}$.
  
  Let $\row{p} = [p_j]_j \in \polMatSpace[1][(\rdim+\cdim)]$ be a row of
  $\mat{A}$, and assume $\rdeg[{\shifts[u]}]{\row{p}} < 0$.  This means
  $\deg(p_j) < -\shift[u]{j}$ for all $j$, so that $\deg(\row{p}) <
  \max(-\shifts[u]) = \maxDeg+1$. Then, for all $1 \le j\le \cdim$ we have
  $\deg( \row{p} \matcol{\mat{N}}{j} ) < \order_j + \maxDeg +1= \tau_j$, and
  from $\row{p} \matcol{\mat{N}}{j} = 0 \bmod X^{\tau_j}$ we obtain $\row{p}
  \matcol{\mat{N}}{j} = 0$, which is absurd by minimality of $\mat{B}$.  As a
  result, $\rdeg[{\shifts[u]}]{\mat{A}} \ge \unishift =
  \rdeg[{\shifts[u]}]{\mat{B}}$ componentwise.
  
  Besides, $\mat{C} \sys = 0 \bmod (X^{\tau_1},\ldots,X^{\tau_\cdim})$ and
  since $\mat{C}$ has its $\shifts[u]$-pivot entries on the diagonal, it is
  $\shifts[u]$-reduced: by minimality of $\mat{A}$, we obtain
  $\rdeg[{\shifts[u]}]{\mat{A}} = \rdeg[{\shifts[u]}]{\mat{C}}$. Then, it is
  easily verified that $\mat{C}$ is in $\shifts[u]$-Popov form, hence $\mat{C} =
  \mat{A}$.
\end{proof}

\vspace{-0.2cm}
In particular, computing the $\shifts[s]$-Popov nullspace basis $\mat{B}$, when
its $\shifts$-pivot index, its $\shifts$-pivot degree, and
$\maxDeg\ge\deg(\mat{B})$ are known, can be done in $\softO{\rdim^{\expmatmul-1}
(\order + \cdim\maxDeg)}$ with $\order = \order_1 + \cdots + \order_\cdim$
using the order basis algorithm in~\cite{JeNeScVi16}.

As for Problem~\ref{pbm:modsys}, with Lemma~\ref{lem:mnb_solbas} this gives an
algorithm for computing $\msb$ \emph{and} the quotients $\mat{Q} = -\msb \sys /
\diag{\Modulus}$ when we know \emph{a priori} the $\shifts$-minimal degree
$\minDegs$ of $(\Modulus,\sys)$. Here, we would choose $\maxDeg =
\max(\minDegs) \ge \deg( [\msb | \mat{Q}])$: in some cases $\maxDeg =
\Theta(\degMod)$ and this has cost bound $\softO{\nbun^{\expmatmul-1} (\degMod
+ \nbeq \degMod)}$, which exceeds our target $\softO{\nbun^{\expmatmul-1}
\degMod}$. An issue is that $\mat{Q}$ has size $\bigO{\nbun \nbeq \degMod}$
when $\msb$ has columns of large degree; yet here we are not interested in
$\mat{Q}$. This can be solved using partial linearization to expand the columns
of large degree in $\msb$ into more columns of smaller degree as in the next
result, which holds in general for interpolation
bases~\cite[Lemma~4.2]{JeNeScVi16}.

\vspace{-0.2cm}
\begin{lem}
  \label{lem:modsys_parlin}
  Let $\Modulus \in \modSpace^\nbeq$ with entries having degrees $(\degMod_1,
  \ldots, \degMod_\nbeq)$. Let $\sys \in \sysSpace$ and $\shifts \in
  \shiftSpace$.  Furthermore, let $\minDegs = (\minDeg_1,\ldots,\minDeg_\nbun)$
  denote the $\shifts$-minimal degree of $(\Modulus,\sys)$.
  
  Writing $\degMod = \degMod_1 + \cdots + \degMod_\nbeq$, let $\degExp = \lceil
  \degMod / \nbun \rceil \ge 1$, and for $i\in\{1,\ldots,\nbun\}$ write
  $\minDeg_i = (\quoExp_i -1) \degExp + \remExp_i$ with $\quoExp_i \ge 1$ and
  $0 \le \remExp_i < \degExp$, and let $\expand{\nbun} = \quoExp_1 + \cdots +
  \quoExp_\nbun$. Define $\expand{\minDegs}\in \NN^{\expand{\nbun}}$ as
  \vspace{-0.2cm}
  \begin{equation}
    \label{eqn:expandMinDegs}
    \expand{\minDegs} = ( \underbrace{\degExp, \ldots, \degExp, \remExp_1}_{\quoExp_1}, \ldots,
    \underbrace{\degExp, \ldots, \degExp, \remExp_\nbun}_{\quoExp_\nbun} )
  \vspace{-0.2cm}
  \end{equation}
  and the expansion-compression matrix $\expandMat \in
  \polMatSpace[\expand{\nbun}][\nbun]$ as
  \vspace{-0.15cm}
  \begin{equation}
    \label{eqn:expandMat}
    \expandMat = 
    \left[\begin{smallmatrix}
      1 \\
      X^\degExp \\[-0.1cm]
      \vdots \\
      X^{(\quoExp_1-1)\degExp} \\[-0.2cm]
      & \;\;\; \ddots \\
      & & 1 \\
      & & X^\degExp \\
      & & \vdots \\
      & & X^{(\quoExp_\nbun-1)\degExp}
  \end{smallmatrix}\right] . 
  \vspace{-0.15cm}
  \end{equation}
  Let $\shifts[d] = - \expand{\minDegs} \in \shiftSpace[\expand{\nbun}]$ and
  $\mat{P} \in \polMatSpace[\expand{\nbun}]$ be the $\shifts[d]$-Popov solution
  basis for $(\Modulus,\expandMat \sys \bmod \Modulus)$.
  Then, $\mat{P}$ has $\shifts[d]$-pivot degree $\expand{\minDegs}$ and the
  $\shifts$-Popov solution basis for $(\Modulus,\sys)$ is the submatrix of
  $\mat{P} \expandMat$ formed by its rows at indices
  $\{\quoExp_1+\cdots+\quoExp_i, 1\le i\le \rdim\}$.
\end{lem}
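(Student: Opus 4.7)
The plan is to construct an explicit candidate matrix $\mat{P}^* \in \polMatSpace[\expand{\nbun}]$ directly from $\mat{B}$, the $\shifts$-Popov solution basis of $(\Modulus,\sys)$, which by definition of the $\shifts$-minimal degree has $\shifts$-pivot degree $\minDegs$; then verify that $\mat{P}^*$ is a solution basis of the expanded system $(\Modulus,\expandMat\sys \bmod \Modulus)$ and is in $\shifts[d]$-Popov form, and conclude $\mat{P}^* = \mat{P}$ by uniqueness. I would arrange the rows of $\mat{P}^*$ into $\nbun$ blocks of sizes $\quoExp_1,\ldots,\quoExp_\nbun$. For block $i$, the last row, placed at position $\quoExp_1+\cdots+\quoExp_i$, is the \emph{main row} $\row{m}^{(i)}$ obtained by writing each entry of $\matrow{\mat{B}}{i}$ in base $X^\degExp$ so that $\row{m}^{(i)}\expandMat = \matrow{\mat{B}}{i}$. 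The other $\quoExp_i-1$ rows in the block are \emph{slack rows} of shape $(\ldots,0,X^\degExp,-1,0,\ldots)$ linking two consecutive positions within the block, and visibly lie in $\ker(\cdot\,\expandMat)$.

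I would first verify that $\mat{P}^*$ is a $\polRing$-basis of the solution module of $(\Modulus,\expandMat\sys \bmod \Modulus)$. Each row is a solution: the main rows because $\row{m}^{(i)}(\expandMat\sys) = \matrow{\mat{B}}{i}\sys \equiv 0 \bmod \Modulus$, and the slack rows trivially since they compress to zero. For the spanning direction, given any expanded solution $\expand{\row{p}}$, its compression $\row{p} = \expand{\row{p}}\expandMat$ is a solution of $(\Modulus,\sys)$, hence $\row{p} = \row{q}\mat{B}$ for some $\row{q} = (q_1,\ldots,q_\nbun) \in \polMatSpace[1][\nbun]$, and the difference $\expand{\row{p}} - \sum_i q_i \row{m}^{(i)}$ lies in $\ker(\cdot\,\expandMat)$. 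This kernel is block-diagonal, and within block $i$ a local relation $\sum_{j=0}^{\quoExp_i-1} u_j X^{j\degExp} = 0$ forces $u_0$ to be divisible by $X^\degExp$, allowing iterative reduction by the slack rows to express any kernel element as a $\polRing$-combination of them.

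Next I would verify that $\mat{P}^*$ is in $\shifts[d]$-Popov form with $\shifts[d]$-pivot degree $\expand{\minDegs}$. The main row $\row{m}^{(i)}$ has its pivot at column $\quoExp_1+\cdots+\quoExp_i$: the entry there is the top base-$X^\degExp$ piece of $b_{i,i}$, monic of degree $\remExp_i$ (matching the shift $-\remExp_i$ and yielding $\shifts[d]$-degree $0$), while for $j\neq i$ the strict bound $\deg(b_{i,j}) < \minDeg_j$ inherited from the $\shifts$-Popov form of $\mat{B}$ forces the top piece of $b_{i,j}$ to have degree $<\remExp_j$ and the lower pieces degree $<\degExp$. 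Each slack row has its pivot at its leftmost nonzero column with monic pivot $X^\degExp$ of degree $\degExp$; the corner case $\remExp_i=0$, where the naive last slack row of block $i$ would conflict with the main pivot at position $\quoExp_i$, is handled by replacing that slack row by its sum with $\row{m}^{(i)}$, which moves its pivot to position $\quoExp_i-1$ with monic leading term $X^\degExp$. By uniqueness of the $\shifts[d]$-Popov form, $\mat{P}^* = \mat{P}$, and the submatrix of $\mat{P}\expandMat$ at the row indices $\{\quoExp_1+\cdots+\quoExp_i, 1\le i \le \rdim\}$ consists of the $\row{m}^{(i)}\expandMat = \matrow{\mat{B}}{i}$, recovering $\mat{B}$ as claimed. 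The main technical difficulty is combining the kernel-of-compression computation with the pivot bookkeeping in the corner case $\remExp_i=0$.
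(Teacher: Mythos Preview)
The paper does not prove this lemma; it is stated with a pointer to \cite[Lemma~4.2]{JeNeScVi16} (see the sentence immediately preceding the lemma). Your proposal therefore goes beyond what the paper itself supplies, and your direct constructive argument is essentially correct.

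The construction is sound: the main rows $\row{m}^{(i)}$ are the unique base-$X^\delta$ expansions of the rows of $\mat{B}$ fitting the column structure of $\expandMat$, and the slack rows generate $\ker(\cdot\,\expandMat)$ as a $\polRing$-module (your iterative reduction argument is the standard one for such staircase kernels). The verification that $\mat{P}^*$ is in $\shifts[d]$-Popov form with pivot degree $\expand{\minDegs}$ is also correct. The key point is that in the square $\shifts$-Popov form $\mat{B}$ one has the strict column bounds $\deg(b_{i,j}) < \delta_j$ for $i\neq j$, which force the top base-$X^\delta$ piece of $b_{i,j}$ to have degree strictly less than $\beta_j$; hence every non-pivot entry of a main row respects the degree constraint in each pivot column. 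Your fix in the corner case $\beta_i = 0$ (adding $\row{m}^{(i)}$ to the last slack row of block $i$ to cancel the $-1$ in column $\alpha_1+\cdots+\alpha_i$) is exactly what is needed; the resulting row inherits the same strict off-block bounds from $\mat{B}$, so the Popov conditions in the other columns are undisturbed, and the main row itself (unchanged) still compresses to $\matrow{\mat{B}}{i}$.

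One minor imprecision: the $\shifts[d]$-pivot index is by definition the \emph{largest} column achieving the maximal shifted degree, not the leftmost. In your slack rows the maximizer is unique, so the conclusion is unaffected, but the phrasing should be adjusted.
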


\vspace{-0.1cm} This leads to Algorithm~\ref{algo:knownmindeg_modsys}, which
solves Problem~\ref{pbm:modsys} efficiently when the $\shifts$-minimal degree
$\minDegs$ is known \emph{a priori}.

\vspace{-0.15cm}
\begin{figure}[h!]
\centering
\fbox{\begin{minipage}{8.2cm}
\begin{algo} [\algoname{KnownDegPolModSys}]  \label{algo:knownmindeg_modsys}
  \normalfont

  ~\\
	\emph{Input:} 
    polynomials $\Modulus = (\modulus_1,\ldots,\modulus_\nbeq) \in \modSpace^\nbeq$,
    a matrix $\sys \in \sysSpace$ with $\deg(\matcol{\sys}{j}) < \deg(\modulus_j)$,
    a shift $\shifts\in\shiftSpace$,
    $\minDegs = (\minDeg_1,\ldots,\minDeg_\rdim)$ the $\shifts$-minimal degree of $(\Modulus,\sys)$.

  \smallskip
  \emph{Output:} the $\shifts$-Popov solution basis for $(\Modulus,\sys)$.
  \vspace{-0.1cm}
	\begin{enumerate}[{\bf 1.}] 
      \setlength{\itemsep}{0cm}
    \item $\degExp \leftarrow \lceil (\deg(\modulus_1) + \cdots + \deg(\modulus_\nbeq)) / \rdim \rceil$, \\
      $\quoExp_i \leftarrow \lfloor \minDeg_i / \degExp \rfloor + 1$ for $1 \le
      i \le \rdim$, 
      $\expand{\rdim} \leftarrow \quoExp_1 + \cdots +
      \quoExp_\rdim$, \\
      $\expand{\minDegs}$ as in~\eqref{eqn:expandMinDegs},
      $\expandMat$ as in~\eqref{eqn:expandMat},
      $\expand{\sys} \leftarrow \expandMat \sys \bmod \Modulus$
    \item $\shifts[u] \leftarrow (-\expand{\minDegs},-\maxDeg-1,\ldots,-\maxDeg-1)
      \in \shiftSpace[\expand{\nbun}+\nbeq]$ \\
      $\boldsymbol{\tau} \leftarrow (\deg(\modulus_j) + \maxDeg+1)_{1 \le j \le \nbeq}$
    \item $\expand{\msb} \leftarrow$ the $\shifts[u]$-Popov order basis for $\trsp{[\trsp{\expand{\sys}}|\diag{\Modulus}]}$
      and $\boldsymbol{\tau}$ \\
      $\mat{P}$ $\leftarrow$ the principal $\expand{\nbun} \times \expand{\nbun}$ submatrix of $\expand{\msb}$
    \item Return the submatrix of $\mat{P} \expandMat$ formed by the rows
      at indices $\quoExp_1+\cdots+\quoExp_i$ for $1\le i\le \rdim$
	\end{enumerate}
\end{algo}
\end{minipage}} 
\vspace{-0.3cm}
\end{figure}

\vspace{-0.3cm}
\begin{prop}
  \label{prop:algo:knownmindeg_modsys}
  Algorithm \algoname{KnownDegPolModSys} is correct. Writing $\degMod =
  \deg(\modulus_1) + \cdots + \deg(\modulus_\nbeq)$ and assuming $\degMod \ge
  \nbun \ge \nbeq$, it uses $\softO{ \nbun^{\expmatmul-1} \degMod }$ operations
  in $\field$.
\end{prop}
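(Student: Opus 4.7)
The plan is to chain Lemmas~\ref{lem:modsys_parlin}, \ref{lem:mnb_solbas}, and \ref{lem:mnb_known_mindeg} for correctness, and then to bound the parameters of the partial linearization to reach the target cost. Throughout I take $\maxDeg = \degExp$ (this choice is implicit in the algorithm's Step~2 and is the smallest value that makes the subsequent lemmas apply).

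\textbf{Correctness.} By Lemma~\ref{lem:modsys_parlin}, the $\shifts[d]$-Popov solution basis $\mat{P}$ for $(\Modulus, \expand{\sys})$ (with $\shifts[d] = -\expand{\minDegs}$ and $\expand{\sys} = \expandMat \sys \bmod \Modulus$) has $\shifts[d]$-pivot degree $\expand{\minDegs}$, and Step~4 applied to this $\mat{P}$ returns the desired $\shifts$-Popov solution basis for $(\Modulus, \sys)$. So it suffices to show that Step~3 computes this $\mat{P}$. I apply Lemma~\ref{lem:mnb_solbas} to $(\Modulus, \expand{\sys})$ with shift $\shifts[d]$ and trailing shift $\shifts[w] = (-\maxDeg - 1, \ldots, -\maxDeg - 1) \in \ZZ^\nbeq$; the assumption $\max(\shifts[w]) \le \min(\shifts[d])$ holds because every entry of $\expand{\minDegs}$ is bounded by $\degExp = \maxDeg$. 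This identifies $\mat{P}$ as the leading $\expand{\nbun} \times \expand{\nbun}$ block of the $\shifts[u]$-Popov nullspace basis $[\mat{P}|\mat{Q}]$ of $\mat{N} = \trsp{[\trsp{\expand{\sys}} | \diag{\Modulus}]}$, with $\shifts[u]$ as in Step~2, $\shifts[u]$-pivot index $(1, \ldots, \expand{\nbun})$, and $\shifts[u]$-pivot degree $\expand{\minDegs}$. Moreover, Lemma~\ref{lem:mnb_solbas} gives $\deg(\mat{Q}) < \deg(\mat{P}) \le \degExp = \maxDeg$, so $\maxDeg$ bounds the degree of the whole nullspace basis. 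Lemma~\ref{lem:mnb_known_mindeg} then identifies $[\mat{P}|\mat{Q}]$ with the top $\expand{\nbun}$ rows of the $\shifts[u]$-Popov order basis for $\mat{N}$ and orders $\boldsymbol{\tau} = (\deg(\modulus_j) + \maxDeg + 1)_{j}$, which is precisely what Step~3 computes.

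\textbf{Cost.} From the inequality $\sumVec{\minDegs} \le \degMod$ recalled at the start of Section~\ref{sec:modsys}, together with $\degExp \ge \degMod/\nbun$, I get
\[
\expand{\nbun} = \sum_i \quoExp_i \le \nbun + \sum_i \minDeg_i/\degExp \le 2\nbun,
\]
so $\expand{\nbun} + \nbeq \in \bigO{\nbun}$ under the hypothesis $\nbeq \le \nbun$. Using $\nbun \degExp \le \degMod + \nbun \le 2\degMod$ (from $\degMod \ge \nbun$),
\[
\sum_j \tau_j = \degMod + \nbeq(\degExp + 1) \in \bigO{\degMod}.
\]
Step~3 is the dominant step: it calls the order basis algorithm of~\cite{JeNeScVi16} on an $(\expand{\nbun}+\nbeq) \times \nbeq$ matrix with orders summing to $\bigO{\degMod}$, at cost $\softO{\nbun^{\expmatmul-1} \degMod}$. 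Step~1 computes $\expandMat \sys \bmod \Modulus$ by iterating multiplications by $X^\degExp \bmod \modulus_j$ column by column, for a total cost in $\softO{\expand{\nbun} \degMod} = \softO{\nbun \degMod}$. Step~4 forms the relevant rows of $\mat{P}\expandMat$ by adding shifted columns of $\mat{P}$, within $\bigO{\nbun \degMod + \nbun^2}$.

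\textbf{Main obstacle.} The only delicate point is the cost bookkeeping: keeping the blown-up dimension $\expand{\nbun}$ in $\bigO{\nbun}$ and the aggregate order $\sum_j \tau_j$ in $\bigO{\degMod}$ so that the order basis call lands exactly at $\softO{\nbun^{\expmatmul-1} \degMod}$, rather than carrying an extra factor of $\nbeq$ or of $\degMod/\nbun$. Both estimates depend crucially on $\sumVec{\minDegs} \le \degMod$ and on the choice $\degExp = \lceil \degMod/\nbun \rceil$, which calibrates the partial linearization so that $\expand{\nbun}$ stays linear in $\nbun$ while keeping every linearized block of degree at most $\degExp$.
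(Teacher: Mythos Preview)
Your proof is correct and follows essentially the same route as the paper: chain Lemma~\ref{lem:modsys_parlin}, Lemma~\ref{lem:mnb_solbas}, and Lemma~\ref{lem:mnb_known_mindeg} for correctness, then bound $\expand{\nbun}\le 2\nbun$ via $\sumVec{\minDegs}\le\degMod$ and $\sum_j\tau_j\in\bigO{\degMod}$ via $\nbeq\le\nbun\le\degMod$ to invoke the order basis cost from~\cite{JeNeScVi16}. Your explicit identification $\maxDeg=\degExp$ is exactly what the paper intends (both macros typeset as $\delta$), and your verification that $\max(\shifts[w])\le\min(\shifts[d])$ and $\deg([\mat{P}|\mat{Q}])\le\maxDeg$ matches the paper's one-line justification.
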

\vspace{-0.3cm}
\begin{proof}
  By Lemmas~\ref{lem:modsys_parlin} and~\ref{lem:mnb_solbas}, since
  $\min(-\expand{\minDegs}) > -\minDeg-1$ and $\shifts[u] =
  (-\expand{\minDegs},-\maxDeg-1,\ldots,-\maxDeg-1)$, the $-\minDegs$-Popov
  solution basis for $(\Modulus,\expand{\sys})$ is the principal
  $\expand{\nbun} \times \expand{\nbun}$ submatrix of the $\shifts[u]$-Popov
  nullspace basis $\mat{B}$ for
  $\trsp{[\trsp{\expand{\sys}}|\diag{\Modulus}]}$, and $\mat{B}$ has
  $\shifts[u]$-pivot index $\{1,\ldots,\expand{\nbun}\}$, $\shifts[u]$-pivot
  degree $\expand{\minDegs}$, and $\deg(\mat{B}) \le \minDeg$. Then, by
  Lemma~\ref{lem:mnb_known_mindeg}, $\mat{B}$ is formed by the first
  $\expand{\nbun}$ rows of $\expand{\mat{P}}$ at Step~\textbf{3}, hence
  $\mat{P}$ is the $\shifts[d]$-Popov solution basis for $(\Modulus,\sys)$. The
  correctness then follows from Lemma~\ref{lem:modsys_parlin}. 

  Since $\sumVec{\minDegs} \le \degMod$, $\expandMat$ has $\expand{\rdim} \le
  2\rdim$ rows and $\expandMat \sys \bmod \Modulus$ can be computed in
  $\softO{\nbun \degMod}$ operations using fast polynomial
  division~\cite{vzGathen13}. The cost bound of Step~\textbf{3} follows from
  \cite[Theorem~1.4]{JeNeScVi16} since $\tau_1 + \cdots + \tau_\nbeq = \degMod
  + \nbeq(1+\lceil \degMod / \nbun \rceil) \in \bigO{\degMod}$.
\end{proof}

\subsection{The case of one equation}
\label{subsec:modsys_one}

We now present our main new ingredients, focusing on the case $\nbeq=1$. First,
we show that when the shift $\shifts$ has a small \emph{amplitude} $\amp =
\max(\shifts)-\min(\shifts)$, one can solve Problem~\ref{pbm:modsys} via an
order basis computation at small order.

\vspace{-0.15cm}
\begin{lem}
  \label{lem:modsys_one_small}
  Let $\modulus \in \modSpace$, $\shifts \in \shiftSpace$, and $\sys \in
  \polMatSpace[\nbun][1]$ with $\deg(\sys) < \deg(\modulus) = \degMod$. Then,
  for any $\tau \ge \amp+2\degMod$, the $\shifts$-Popov solution basis for
  $(\modulus,\sys)$ is the principal $\nbun \times \nbun$ submatrix of the
  $\shifts[u]$-Popov order basis for $\trsp{ [\trsp{\sys} | \modulus] }$ and
  $\tau$, with $\shifts[u] = (\shifts,\min(\shifts)) \in \shiftSpace[\nbun+1]$.
\end{lem}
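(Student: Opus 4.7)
My plan is to use Lemma~\ref{lem:mnb_solbas} to reduce the computation of the $\shifts$-Popov solution basis to a nullspace basis problem, and then show that this nullspace basis appears as the top $\nbun$ rows of the $\shifts[u]$-Popov order basis at order $\tau \ge \amp + 2\degMod$.

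First, I would apply Lemma~\ref{lem:mnb_solbas} with $\nbeq=1$ and $\shifts[w] = (\min(\shifts))$; the hypothesis $\max(\shifts[w]) \le \min(\shifts)$ holds trivially. This yields a column $\col{q} \in \polMatSpace[\nbun][1]$ such that $[\msb | \col{q}]$ is the $\shifts[u]$-Popov nullspace basis of $\mat{N} = \trsp{[\trsp{\sys} | \modulus]}$, with $\shifts[u]$-pivot index $(1, \ldots, \nbun)$ and pivot degree $(\minDeg_1, \ldots, \minDeg_\nbun)$. The standard bound $\sumVec{\minDegs} \le \degMod$ recalled at the beginning of Section~\ref{sec:modsys} gives $\deg([\msb | \col{q}]) \le \degMod$.

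Next, let $\mat{A}$ be the $\shifts[u]$-Popov order basis of $\mat{N}$ at order $\tau$. Since $\mat{A}$ is $(\nbun+1)\times(\nbun+1)$ nonsingular and its rows are sorted by strictly increasing $\shifts[u]$-pivot index, the pivots are exactly $(1,2,\ldots,\nbun+1)$ with some pivot degrees $(\delta_1,\ldots,\delta_{\nbun+1})$. The nullspace $\mathcal{N}$ of $\mat{N}$ is contained in the order-$\tau$ approximant module $\mathcal{M}_\tau$, so by the standard characterization of Popov pivot degrees as minima over the module, $\delta_j \le \minDeg_j$ for $j \le \nbun$. Hence for $j \le \nbun$, row $\row{r}_j$ of $\mat{A}$ has $\shifts[u]$-row degree $\delta_j + \shift{j} \le \degMod + \max(\shifts)$, so each entry has degree at most $\degMod + \amp$; moreover, using that the pivot index $j$ is the largest index achieving the row degree and $\nbun+1 > j$, we get $\deg(r_{j,\nbun+1}) + \min(\shifts) < \delta_j + \shift{j}$, whence $\deg(r_{j,\nbun+1}) \le \degMod + \amp - 1$. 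Combining with $\deg(\sys) < \degMod$ and $\deg(\modulus) = \degMod$ gives $\deg(\row{r}_j \mat{N}) \le 2\degMod + \amp - 1 < \tau$; since $\row{r}_j \mat{N} \equiv 0 \pmod{X^\tau}$ has degree strictly less than $\tau$, it must vanish, so $\row{r}_j \in \mathcal{N}$.

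Finally, the first $\nbun$ rows $\mat{A}_1$ of $\mat{A}$ lie in $\mathcal{N}$ and inherit the $\shifts[u]$-Popov form from $\mat{A}$ with pivots at $(1,\ldots,\nbun)$. The last row $\row{a}_{\nbun+1}$ of $\mat{A}$ cannot belong to $\mathcal{N}$ (else $\mat{A}$'s $\nbun+1$ linearly independent rows would fit inside the rank-$\nbun$ module $\mathcal{N}$); combined with $\polRing$ being a domain, this forces the coefficient of $\row{a}_{\nbun+1}$ to vanish in any nullspace combination, giving $\mathrm{rowspan}(\mat{A}_1) = \mathcal{N}$. By uniqueness of the $\shifts[u]$-Popov basis of $\mathcal{N}$, we conclude $\mat{A}_1 = [\msb | \col{q}]$, and the principal $\nbun \times \nbun$ submatrix of $\mat{A}$ equals $\msb$. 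The main technical point is the degree argument, which tightly uses the hypothesis $\tau \ge \amp + 2\degMod$ together with the strict gap coming from the pivot being the largest index achieving the row degree.
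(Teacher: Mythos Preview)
Your proof is correct and follows essentially the same strategy as the paper: reduce via Lemma~\ref{lem:mnb_solbas} to showing that the first $\nbun$ rows of the $\shifts[u]$-Popov order basis $\mat{A}$ coincide with the $\shifts[u]$-Popov nullspace basis, bound the degrees of those rows so that the order condition forces exact vanishing, and conclude by uniqueness. The one notable variation is in how the degree bound is obtained. The paper argues internally: it first shows that the \emph{last} row of $\mat{A}$ has pivot degree $\deg(q)\ge\tau-\degMod$ (from $\deg(\row{p}\sys+q\modulus)=\deg(q)+\degMod\ge\tau$), then uses that the pivot degrees of $\mat{A}$ sum to at most $\tau$ to conclude the first $\nbun$ pivot degrees sum to at most $\degMod$. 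You instead argue externally: from $\mathcal{N}\subset\mathcal{M}_\tau$ and the standard minimality of Popov pivot degrees you get $\delta_j\le\minDeg_j$ for each $j\le\nbun$, and then invoke $\sumVec{\minDegs}\le\degMod$. Both routes yield $\delta_j\le\degMod$ and hence the bound $\deg(\row{r}_j\mat{N})\le 2\degMod+\amp-1<\tau$; the paper's is more self-contained (it never appeals to the solution basis $\msb$), while yours avoids analyzing the last row of $\mat{A}$ altogether.
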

\begin{proof}
  Let $\mat{A}
    = \left[ \begin{smallmatrix}
        \msb & \col{q} \\
        \row{p} & q 
    \end{smallmatrix} \right]$
  denote the $\shifts[u]$-Popov order basis for $\trsp{
    [\trsp{\sys} | \modulus] }$ and $\tau$,
  where $\mat{P} \in \polMatSpace[\nbun]$
  and $q \in \polRing$. Consider $\mat{B} = [\bar{\mat{P}} | \bar{\col{q}} ]$
  the $\shifts[u]$-Popov nullspace basis of $\trsp{[\trsp{\sys} | \modulus]}$:
  thanks to Lemma~\ref{lem:mnb_solbas}, it is enough to prove that $\mat{B} =
  [\mat{P}|\col{q}]$.

  First, we have $\rdeg{\row{p}} \le \deg(q)$ by choice of $\shifts[u]$, so
  that $q \modulus \neq 0$ implies $\deg(\row{p} \sys + q \modulus) = \deg(q) +
  \degMod$. Since $\row{p} \sys + q \modulus = 0 \bmod X^{\tau}$, this gives
  $\deg(q) + \degMod \ge \tau$. This also shows that the $\shifts[u]$-pivot
  entries of $\mat{B}$ are located in $\bar{\mat{P}}$.

  Then, since the sum of the $\shifts[u]$-pivot degrees of $\mat{A}$ is at most
  $\tau$, the sum of the $\shifts$-pivot degrees of $\mat{P}$ is at most
  $\degMod$; with $[\mat{P}|\col{q}]$ in $\shifts[u]$-Popov form, this gives
  $\deg(\col{q}) < \degMod + \amp \le \tau-\degMod$. We obtain $\deg(\mat{P}
  \sys + \col{q} \modulus) < \tau$, so that $\mat{P} \sys + \col{q} \modulus =
  0$.
  Thus, the minimality of $\mat{B}$ and $\mat{A}$ gives the conclusion.
\end{proof}

\vspace{-0.15cm}
When $\amp \in \bigO{\degMod}$, this gives a fast solution to our problem. In
what follows, we present a divide-and-conquer approach on $\amp$, with base
case $\amp \in \bigO{\degMod}$.

We first give an overview, assuming $\shifts$ is non-decreasing. A key
ingredient is that when $\amp$ is large compared to $\degMod$, then $\msb$ has
a lower block triangular shape, since it is in $\shifts$-Popov form with sum of
$\shifts$-pivot degrees $\sumVec{\minDegs} \le \degMod$. Typically, if
$\shift{i+1} - \shift{i} \ge \degMod$ for some $i$ then $\msb =
\left[\begin{smallmatrix} \msb^{(1)} & \mat{0} \\ \anyMat & \msb^{(2)}
\end{smallmatrix}\right]$ with $\msb^{(1)} \in \polMatSpace[i]$.
Even though the block sizes are unknown in general, we show that they can be
revealed efficiently along with $\minDegs$ by a divide-and-conquer algorithm,
as follows.

First, we use a recursive call with the first $j$ entries $\shifts^{(0)}$ of
$\shifts$ and $\sys^{(0)}$ of $\sys$, where $j$ is such that
$\amp[\shifts^{(0)}]$ is about half of $\amp$. This reveals the first $i\le j$
entries $\minDegs^{(1)}$ of $\minDegs$ and the first $i$ rows $[\msb^{(1)} |
\mat{0}]$ of $\msb$, with $\msb^{(1)} \in \polMatSpace[i]$. A central point is
that $\amp[\shifts^{(2)}]$ is about half of $\amp$ as well, where
$\shifts^{(2)}$ is the tail of $\shifts$ starting at the entry $i+1$.

Then, knowing the degrees $\minDegs^{(1)}$ allows us to set up an order basis
computation that yields a \emph{residual}, that is, a column $\mat{G} \in
\polMatSpace[(\nbun-i)][1]$ and a modulus $\modulus[n]$ such that we can
continue the computation of $\msb$ using a second recursive call, which
consists in computing the $\shifts^{(2)}$-Popov solution basis for
$(\modulus[n],\mat{G})$. From these two calls we obtain $\minDegs$, and then
we recover $\msb$ using Algorithm~\ref{algo:knownmindeg_modsys}.

Now we present the details. We fix $\sys \in \polMatSpace[\nbun][1]$, $\modulus
\in \modSpace$ with $\degMod = \deg(\modulus) > \deg(\sys)$, $\shifts \in
\shiftSpace$, $\msb$ the $\shifts$-Popov solution basis for $(\modulus,\sys)$,
and $\minDegs$ its $\shifts$-pivot degree. In what follows, $\perm =
(\pi_1,\ldots,\pi_\nbun)$ is any permutation of $\{1,\ldots,\nbun\}$ such that
$(\shift{\pi_1},\ldots,\shift{\pi_\nbun})$ is non-decreasing.

Then, for $\shifts[t] = (t_1,\ldots,t_\nbun) \in \ZZ^\nbun$ we write
$\subVec{\shifts[t]}{i}{j}$ for the subtuple of $\shifts[t]$ formed by its
entries at indices $\{\pi_i,\ldots,\pi_j\}$, and for a matrix $\mat{M} \in
\polMatSpace[\nbun]$ we write $\subMat{\mat{M}}{i}{j}{k}{l}$ for the submatrix
of $\mat{M}$ formed by its rows at indices $\{\pi_i,\pi_{i+1},\ldots,\pi_j\}$
and columns at indices $\{\pi_k,\pi_{k+1},\ldots,\pi_l\}$. The main ideas in
this subsection can be understood by focusing on the case of a non-decreasing
$\shifts$, taking $\pi_i = i$ for all $i$: then we have
$\subVec{\shifts[t]}{i}{j} = (t_i,t_{i+1},\ldots,t_j)$ and
$\subMat{\mat{M}}{i}{j}{k}{l} = (\mat{M}_{u,v})_{i\le u\le j,k\le v\le l}$.

We now introduce the notion of splitting index, which will help us to locate
zero blocks in $\msb$.

\vspace{-0.15cm}
\begin{dfn}[Splitting index]
  Let $\shifts[d] \in \NN^\nbun$, $\shifts[t] \in \ZZ^\nbun$ and
  $\perm[{\shifts[t]}] = (\mu_i)_i$. Then, $i \in \{1,\ldots,\nbun-1\}$ is a
  \emph{splitting index} for $(\shifts[d],\shifts[t])$ if $d_{\mu_j} +
  \shift[t]{\mu_j} - \shift[t]{\mu_{i+1}} < 0$ for all $j \in \{1,\ldots,i\}$.
\end{dfn}

\vspace{-0.15cm}
In particular, if $i$ is a splitting index for $(\minDegs,\shifts)$, then we
have $[\subMat{\msb}{}{i}{}{i} | \subMat{\msb}{}{i}{i+1}{}] =
[\subMat{\msb}{}{i}{}{i} | \mat{0}]$. Our algorithm first looks for such a
splitting index, and then uses $\subMat{\msb}{}{i}{i+1}{} = \mat{0}$ to split
the problem into two subproblems of dimensions $i$ and $\nbun-i$.

To find a splitting index, we rely on the following property: if
$(\shifts[d],\shifts[t])$ does not admit a splitting index, then
$\sumVec{\shifts[d]} > \amp[\shifts[t]]$. This allows us to partition $\shifts$
into $\ell$ subtuples which all contain a splitting index, as follows.

Given $\alpha \in \ZZp$ we let $\ell = 1 + \lfloor \amp / \alpha \rfloor$ and
we consider the subtuples $\shifts_1,\ldots,\shifts_\ell$ of~$\shifts$ where
$\shifts_k$ consists of the entries of $\shifts$ in $\{\min(\shifts) + (k-1)
\alpha, \ldots, \min(\shifts) + k\alpha-1\}$; this gives a subroutine
$\algoname{Partition}(\shifts,\alpha) = (\shifts_1,\ldots,\shifts_\ell)$. Now
we take $\alpha \ge 2\degMod$ and we assume $\shift{\pi_{i+1}} - \shift{\pi_i}
\le \degMod$ for $1 \le i < \nbun$ without loss of generality \cite[Appendix
A]{JeNeScVi16}. Then, for $1\le k<\ell$, since $\sumVec{\minDegs} \le \degMod$
and $\amp[\shifts[t]] \ge \degMod$ with $\shifts[t] =
(\shifts_k,\min(\shifts_{k+1}))$, by the above remark $\shifts_k$ contains a
splitting index for $(\minDegs,\shifts)$.

Still, we do not know in advance which entries of $\shifts_k$ correspond to
splitting indices for $(\minDegs,\shifts)$. Thus we recursively compute the
$\shifts$-Popov solution basis $\msb^{(0)}$ for
$\shifts_1,\ldots,\shifts_{\ell/2}$, and we are now going to prove that this
gives us a splitting index which divides the computation into two subproblems,
the first of which has been already solved by computing $\msb^{(0)}$.

\vspace{-0.2cm}
\begin{lem}
  \label{lem:one:splitting_index}
  Let $j \in \{2,\ldots,\nbun\}$, $\shifts^{(0)} = \subVec{\shifts}{}{j}$,
  $\msb^{(0)}$ be the $\shifts^{(0)}$-Popov solution basis for
  $(\modulus,\subVec{\sys}{}{j})$, and $\minDegs^{(0)}$ be its
  $\shifts^{(0)}$-pivot degree. Suppose that there is a splitting index $i \le
  j$ for $(\minDegs^{(0)},\shifts^{(0)})$. Let $\msb^{(1)} \in \polMatSpace[i]$
  be the $\shifts^{(1)}$-Popov solution basis for
  $(\modulus,\subVec{\sys}{}{i})$ with $\shifts^{(1)} = \subVec{\shifts}{}{i}$,
  and let $\minDegs^{(1)}$ be its $\shifts^{(1)}$-pivot degree. Then $i$ is a
  splitting index for $(\minDegs,\shifts)$ and $\subMat{\msb}{}{i}{}{i} =
  \msb^{(1)} = \subMat{\msb^{(0)}}{}{i}{}{i}$, hence $\subVec{\minDegs}{}{i} =
  \minDegs^{(1)} = \subVec{\minDegs^{(0)}}{}{i}$ (where $\msb^{(0)}$ and
  $\minDegs^{(0)}$ are indexed by $\{\pi_1,\ldots,\pi_j\}$ sorted
  increasingly).
\end{lem}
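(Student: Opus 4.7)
The plan is to exploit the splitting hypothesis on $\msb^{(0)}$ to produce $i$ $\polRing$-linearly independent solutions of $(\modulus,\sys)$ with $\shifts$-row degree less than $\shift{\pi_{i+1}}$, and then to use the predictable degree property of the $\shifts$-reduced basis $\msb$ to deduce the splitting property for $(\minDegs,\shifts)$. Once this is established, the stated identities will follow from uniqueness of the $\shifts^{(1)}$-Popov solution basis.

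First I would show $\subMat{\msb^{(0)}}{}{i}{i+1}{j} = \mat{0}$: for $l \in \{1,\ldots,i\}$, the $\pi_l$-th row of $\msb^{(0)}$ has $\shifts^{(0)}$-row degree $\minDeg^{(0)}_{\pi_l} + \shift{\pi_l}$ (attained at the pivot in column $\pi_l$), which by the splitting hypothesis is less than $\shift{\pi_{i+1}} \le \shift{\pi_k}$ for every $k \in \{i+1,\ldots,j\}$. Consequently $\deg(\msb^{(0)}_{\pi_l,\pi_k}) < 0$, so this entry vanishes. Padding each of the $i$ rows of $\subMat{\msb^{(0)}}{}{i}{}{i}$ with zeros at coordinates $\pi_{i+1},\ldots,\pi_\nbun$ then yields $i$ vectors in $\polMatSpace[1][\nbun]$ that are $\polRing$-linearly independent (as $\msb^{(0)}$ is nonsingular), solve $(\modulus,\sys)$ (padding by zeros preserves $\sum_k p_k \sys_k \bmod \modulus$), and have $\shifts$-row degree equal to the corresponding $\shifts^{(0)}$-row degree in $\msb^{(0)}$, hence less than $\shift{\pi_{i+1}}$.

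The central step, and the main obstacle, is the splitting property for $(\minDegs,\shifts)$. Since $\msb$ is $\shifts$-reduced with $\rdeg[\shifts]{\matrow{\msb}{\pi_l}} = \minDeg_{\pi_l} + \shift{\pi_l}$, the predictable degree property yields $\rdeg[\shifts]{\row{c}\msb} = \max_l(\deg(c_{\pi_l}) + \minDeg_{\pi_l} + \shift{\pi_l})$ for every $\row{c} \in \polMatSpace[1][\nbun]$; hence, for any solution with $\shifts$-row degree less than $\shift{\pi_{i+1}}$, we have $c_{\pi_l} = 0$ whenever $\minDeg_{\pi_l} + \shift{\pi_l} \ge \shift{\pi_{i+1}}$. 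Setting $I' = \{\pi_l : \minDeg_{\pi_l} + \shift{\pi_l} < \shift{\pi_{i+1}}\} \subseteq \{1,\ldots,\nbun\}$, the $\msb$-coefficient vectors of the $i$ constructed solutions have support inside $I'$ and are $\polRing$-linearly independent, so $|I'| \ge i$. On the other hand, $\minDeg_{\pi_l} \ge 0$ together with $\shift{\pi_l} \ge \shift{\pi_{i+1}}$ for $l \ge i+1$ gives $I' \subseteq \{\pi_1,\ldots,\pi_i\}$, whence $I' = \{\pi_1,\ldots,\pi_i\}$; this is exactly the splitting condition for $(\minDegs,\shifts)$.

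Finally, applying the first-step degree argument to $\msb$ yields $\subMat{\msb}{}{i}{i+1}{} = \mat{0}$. Hence $\subMat{\msb}{}{i}{}{i}$ is a nonsingular $i\times i$ matrix in $\shifts^{(1)}$-Popov form (inherited from $\msb$); its rows, padded by zeros at $\pi_{i+1},\ldots,\pi_\nbun$, are rows of $\msb$ and hence solutions of $(\modulus,\sys)$, so once the zeros are stripped they are solutions of $(\modulus,\subVec{\sys}{}{i})$. Conversely, any solution of $(\modulus,\subVec{\sys}{}{i})$ padded by zeros solves $(\modulus,\sys)$ and therefore equals $\row{c}\msb$ for some $\row{c}$; the nonsingularity of the bottom-right block $\subMat{\msb}{i+1}{\nbun}{i+1}{\nbun}$ forces the coefficients on the last $\nbun-i$ rows (in sorted view) to vanish, placing the solution in the rowspace of $\subMat{\msb}{}{i}{}{i}$. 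Thus $\subMat{\msb}{}{i}{}{i}$ is a solution basis, equal to $\msb^{(1)}$ by uniqueness of the Popov form; the same argument applied to $\msb^{(0)}$ (whose block structure is immediate from the hypothesis) gives $\subMat{\msb^{(0)}}{}{i}{}{i} = \msb^{(1)}$, and the $\shifts^{(1)}$-pivot degree equalities follow by reading off the diagonal degrees.
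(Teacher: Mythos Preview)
Your proof is correct and follows the same overall architecture as the paper's: exhibit $i$ solutions of $(\modulus,\sys)$ with $\shifts$-row degree below $\shift{\pi_{i+1}}$, deduce that $i$ is a splitting index for $(\minDegs,\shifts)$, then identify each top-left block with $\msb^{(1)}$ by showing it is a solution basis for $(\modulus,\subVec{\sys}{}{i})$ in $\shifts^{(1)}$-Popov form.

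The mechanism for the central step differs slightly. The paper builds its $i$ witness solutions from $[\msb^{(1)}\,|\,\mat{0}]$ rather than from $\msb^{(0)}$, and then invokes ``minimality of $\msb$'' to obtain directly a componentwise bound $\subVec{\minDegs}{}{i} \le \minDegs^{(1)}$ on pivot degrees. You instead use the predictable-degree property together with the dimension count $|I'| \ge i$ and $I' \subseteq \{\pi_1,\ldots,\pi_i\}$. Your route is more explicit and self-contained (it does not rely on any unstated fact about how Popov bases compare componentwise to partial reduced submatrices), while the paper's is a touch shorter. The final identification step is essentially identical in both proofs.
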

\begin{proof}
\vspace{-0.2cm}
  Since $i$ is a splitting index for $(\minDegs^{(0)},\shifts^{(0)})$ we have
  $[\subMat{\msb^{(0)}}{}{i}{}{i} | \subMat{\msb^{(0)}}{}{i}{i+1}{}] = [\mat{Q}
  | \mat{0} ]$ for some $\mat{Q} \in \polMatSpace[i]$. Now, for any $\mat{B}
  \in \polMatSpace[\nbun][\nbun]$ with $[\subMat{\mat{B}}{}{i}{}{i} |
  \subMat{\mat{B}}{}{i}{i+1}{}] = [\msb^{(1)} | \mat{0}]$,
  $\subMat{\mat{B}}{}{i}{}{}$ is in $\shifts$-Popov form with its rows being
  solutions for $(\Modulus,\sys)$. Then, by minimality of $\msb$,
  $\subMat{\msb}{}{i}{}{}$ has $\shifts$-pivot degree at most $\minDegs^{(1)}$
  componentwise, so that $i$ is also a splitting index for
  $(\minDegs,\shifts)$, and in particular $[\subMat{\msb}{}{i}{}{i} |
  \subMat{\msb}{}{i}{i+1}{}] = [\mat{R} | \mat{0} ]$ for some $\mat{R} \in
  \polMatSpace[i]$. It remains to prove that $\mat{Q} = \mat{R} = \msb^{(1)}$.

  Since $\mat{R} \subVec{\sys}{}{i} = 0 \bmod \modulus$ and $\mat{R} =
  \subMat{\msb}{}{i}{}{i}$ is in $\shifts^{(1)}$-Popov form, proving that all
  solutions $\row{p} \in \polMatSpace[1][i]$ for
  $(\modulus,\subVec{\sys}{}{i})$ are in the row space of $\mat{R}$ is enough
  to obtain $\mat{R} = \msb^{(1)}$. Since $\row{q} \in \polMatSpace[1][\nbun]$
  defined by $[\subVec{\row{q}}{}{i} | \subVec{\row{q}}{i+1}{}] = [\row{p} |
  \mat{0}]$ is a solution for $(\modulus,\sys)$, $\row{q} =
  \boldsymbol{\lambda} \msb$ for some $\boldsymbol{\lambda} \in
  \polMatSpace[1][\nbun]$. Now $\msb$ is nonsingular, thus
  $\subMat{\msb}{}{i}{i+1}{} = \mat{0}$ implies that
  $[\subVec{\boldsymbol{\lambda}}{}{i} | \subVec{\boldsymbol{\lambda}}{i+1}{}]
  = [\boldsymbol{\mu} | \row{0}]$ with $\boldsymbol{\mu} \in
  \polMatSpace[1][i]$, hence $\row{p} = \subVec{\row{q}}{}{i} =
  \subVec{\boldsymbol{\lambda}}{}{i} \subMat{\msb}{}{i}{}{i} +
  \subVec{\boldsymbol{\lambda}}{i+1}{} \subMat{\msb}{i+1}{}{}{i} =
  \boldsymbol{\mu} \mat{Q}$. Similar arguments give $\mat{Q} = \msb^{(1)}$.
\end{proof}

\vspace{-0.2cm}
The next two lemmas show that knowing $\minDegs^{(1)}$, which is
$\subVec{\minDegs}{}{i}$, allows us to compute a so-called \emph{residual}
$(\modulus[n],\mat{G})$ from which we can complete the computation of
$\minDegs$ and $\msb$.

\vspace{-0.15cm}
\begin{lem}
  \label{lem:one:order_basis_shape}
  Let $\shifts^{(2)} = \subVec{\shifts}{i+1}{}$, $\shifts[d] =
  -\minDegs^{(1)}+\min(\shifts^{(2)})-2\degMod \in \shiftSpace[i]$, $\shifts[v]
  \in \shiftSpace$ be such that $[\subVec{\shifts[v]}{}{i} |
  \subVec{\shifts[v]}{i+1}{}] = [\shifts[d] | \shifts^{(2)}]$, and $\shifts[u]
  = (\shifts[v],\min(\shifts[d])) \in \shiftSpace[\nbun+1]$.  Let 
  $\left[ \begin{smallmatrix}
    \mat{A} & \col{q} \\
    \row{p} & q 
  \end{smallmatrix} \right]$
  be the $\shifts[u]$-Popov order basis for $\trsp{[\trsp{\sys} | \modulus]}$
  and $2\degMod$, where $\mat{A} \in \polMatSpace[\nbun]$ and $q \in \polRing$.
  Then we have $\deg(q) \ge \degMod$, $\subMat{\mat{A}}{}{i}{i+1}{} = \mat{0}$,
  $\subVec{\row{p}}{i+1}{} = \mat{0}$, and $[ \subMat{\mat{A}}{}{i}{}{i} |
  \subVec{\row{q}}{}{i} ] = [ \msb^{(1)} | \col{q}^{(1)} ]$ with $\col{q}^{(1)}
  = -\msb^{(1)} \subVec{\sys}{}{i} / \modulus$.
\end{lem}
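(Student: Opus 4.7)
The plan is to mirror Lemma~\ref{lem:modsys_one_small}: exploit that the order basis is an $(\nbun+1)\times(\nbun+1)$ matrix in $\shifts[u]$-Popov form (so pivots sit on the diagonal), compare with explicit elements of the order module, and invoke minimality and uniqueness of the Popov basis. The first claim, $\deg(q)\ge\degMod$, follows verbatim from the argument of Lemma~\ref{lem:modsys_one_small}: since the last row has its $\shifts[u]$-pivot at column $\nbun+1$ and $\shift[u]{\nbun+1} = \min(\shifts[d]) = \min(\shifts[v])$, the rightmost-maximum property yields $\deg(p_j) \le \deg(q)$ for every $j$, hence $\deg(\row{p}\sys) < \deg(q) + \degMod$, and the order relation combined with $q\ne 0$ forces $\deg(q)+\degMod \ge 2\degMod$.

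For the two claims concerning the top block of $\mat{A}$ and the first $i$ entries of $\col{q}$, I would exhibit, for each $k \le i$, the module element $[\msb^{(1)}_{k,*} \,|\, \mat{0} \,|\, q^{(1)}_k]$ with $\col{q}^{(1)} = -\msb^{(1)} \subVec{\sys}{}{i}/\modulus$ (with $\msb^{(1)}_{k,*}$ placed at positions $\{\pi_1,\ldots,\pi_i\}$); this element satisfies the order relation exactly. A short computation using the column-Popov property of $\msb^{(1)}$ (non-pivot entries in column $l$ have degree $<\minDeg^{(1)}_l$) together with the bound $\deg(q^{(1)}_k) < \max(\minDegs^{(1)})$ shows that this element has $\shifts[u]$-pivot at column $\pi_k$ with pivot degree $\minDeg^{(1)}_k$. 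By minimality, the $k$-th row of the basis has $\shifts[u]$-pivot at column $\pi_k$ with pivot degree at most $\minDeg^{(1)}_k$; strict inequality is ruled out because the rightmost-max Popov condition, combined with the specific choice $\shift[d]{l} = -\minDeg^{(1)}_l + \min(\shifts^{(2)}) - 2\degMod$, would force the entries of that row at columns $\{\pi_{i+1},\ldots,\pi_\nbun\}$ and at $\nbun+1$ to be so small that the order relation collapses into an exact equation, producing a solution of $(\modulus,\subVec{\sys}{}{i})$ strictly smaller than the $k$-th row of $\msb^{(1)}$, contradicting the canonicity of $\msb^{(1)}$. The same exact-equation collapse (now applied to row $k$ with its pivot degree pinned at $\minDeg^{(1)}_k$) identifies the first $i$ entries with $\msb^{(1)}_{k,*}$ and the last entry with $q^{(1)}_k$, while the degree bound shows that the middle block $\subMat{\mat{A}}{}{i}{i+1}{}$ is zero.

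Finally, for the claim $\subVec{\row{p}}{i+1}{} = \mat{0}$, the rightmost-max inequality on the last row combined with the upper bound $\deg(q) \le 2\degMod$ supplied by the trivial module element $[\mat{0} \,|\, \mat{0} \,|\, X^{2\degMod}]$ yields $\deg(p_j) \le -\max(\minDegs^{(1)})$ for $j \in \{\pi_{i+1},\ldots,\pi_\nbun\}$, forcing $p_j = 0$ whenever $\max(\minDegs^{(1)}) \ge 1$. The main obstacle is the borderline case $\max(\minDegs^{(1)}) = 0$: here the bound only says $p_j$ is a constant, but then $\msb^{(1)} = \idMat[i]$ and $\subVec{\sys}{}{i} = \mat{0}$ (since $\deg(\subVec{\sys}{}{i}) < \degMod$), so the problem decouples over the first $i$ coordinates, and a sharper bound on $\deg(q)$ obtained from the decoupled subproblem (or, equivalently, by reapplying the rightmost-max and exact-equation argument to the reduced module) pushes $\deg(p_j)$ strictly below zero, completing the proof.
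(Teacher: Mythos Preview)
Your overall strategy matches the paper's proof closely: the argument for $\deg(q) \ge \degMod$ is identical, and your treatment of the top block (exhibiting the rows $[\msb^{(1)}_{k,*} \,|\, \mat{0} \,|\, q^{(1)}_k]$, using minimality to pin the pivot degrees at $\minDegs^{(1)}$, then collapsing the order relation to an exact equation and invoking uniqueness of the Popov form) is exactly what the paper means by ``as in the proof of Lemma~\ref{lem:mnb_known_mindeg}''. The zero block $\subMat{\mat{A}}{}{i}{i+1}{} = \mat{0}$ also falls out of your argument the same way it does in the paper, via the gap $\min(\shifts^{(2)}) - \max(\shifts[d]) = 2\degMod$.

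There is, however, a genuine gap in your handling of $\subVec{\row{p}}{i+1}{} = \mat{0}$. You case-split on $\max(\minDegs^{(1)})$, and in the borderline case $\max(\minDegs^{(1)}) = 0$ you claim a ``sharper bound on $\deg(q)$'' coming from the decoupled subproblem. No such bound exists in general: take for instance $\sys = \mat{0}$ and any $\modulus$ with $\modulus(0) \ne 0$; then the $\shifts[u]$-Popov order basis has $\mat{A} = \idMat[\nbun]$, $\row{p} = \mat{0}$, and $q = X^{2\degMod}$, so $\deg(q) = 2\degMod$ exactly. The conclusion $\subVec{\row{p}}{i+1}{} = \mat{0}$ still holds in this example, but not for the reason you give --- your ``sharper bound'' simply fails.

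The paper's proof avoids this by splitting on $\deg(q)$ rather than on $\max(\minDegs^{(1)})$. If $\deg(q) < 2\degMod$ strictly, then $\min(\shifts^{(2)}) - \min(\shifts[d]) = \max(\minDegs^{(1)}) + 2\degMod > \deg(q)$, and the pivot inequality gives $\deg(p_{\pi_l}) \le \deg(q) + \min(\shifts[d]) - \min(\shifts^{(2)}) < 0$ for $l>i$. If $\deg(q) = 2\degMod$, then since the sum of all $\shifts[u]$-pivot degrees of the order basis is at most the order $2\degMod$, every diagonal entry of $\mat{A}$ has degree $0$; being in Popov form, $\mat{A} = \idMat[\nbun]$, and the Popov column condition (off-diagonal entries strictly below the pivot degree) forces $\row{p} = \mat{0}$ outright. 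This is both simpler and correct.
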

\begin{proof}
  \vspace{-0.1cm}
  Since $\shifts[u] = (\shifts[v],\min(\shifts[v]))$ we have $\deg(\row{p}) \le
  \deg(q)$, and since $\deg(\sys) < \deg(\modulus)$ the degree of $\row{p} \sys
  + q \modulus$ is $\deg(q) + \degMod$; then $\row{p} \sys + q \modulus = 0
  \bmod X^{2\degMod}$ implies $\deg(q) + \degMod \ge 2\degMod$. Now, since
  $\mat{A}$ is in $\shifts[v]$-Popov form and $\deg(\mat{A}) \le 2\degMod -
  \deg(q) < 2\degMod$, from $\min(\shifts^{(2)}) \ge \max(\shifts[d]) +
  2\degMod$ we get $\subMat{\mat{A}}{}{i}{i+1}{} = \mat{0}$. Besides,
  $\subVec{\row{p}}{i+1}{} = \mat{0}$ since either $\deg(q)<2\degMod$ and then
  $\min(\shifts^{(2)}) > \min(\shifts[d]) + \deg(q)$, or $\mat{A}$ is the
  identity matrix and then $\row{p} = \row{0}$.

  Furthermore, by Lemma~\ref{lem:mnb_solbas} $[\msb^{(1)} | \col{q}^{(1)}]$ is the
  $(\shifts[d],\min(\shifts[d]))$-Popov nullspace basis for
  $\trsp{[\trsp{\subVec{\sys}{}{i}} | \modulus]}$, with
  $(\shifts[d],\min(\shifts[d]))$-pivot index $\{1,\ldots,i\}$,
  $(\shifts[d],\min(\shifts[d]))$-pivot degree $\minDegs^{(1)}$ and degree at
  most $\max(\minDegs^{(1)})$. Then, as in the proof of
  Lemma~\ref{lem:mnb_known_mindeg}, one can show that $[
  \subMat{\mat{A}}{}{i}{}{i} | \subVec{\row{q}}{}{i} ] = [ \msb^{(1)} |
  \col{q}^{(1)} ]$.
\end{proof}

\vspace{-0.15cm}
Thus, up to row and column permutations this order basis is
$\left[\begin{smallmatrix}
    \msb^{(1)} & \mat{0} & \col{q}^{(1)} \\
    \anyMat & \msb^{(2)} & \anyMat \\ \anyMat & \mat{0} & q
\end{smallmatrix}\right]$
with $\msb^{(2)} = \subMat{\mat{A}}{i+1}{}{i+1}{} \in \polMatSpace[(\nbun-i)]$ in
$\shifts^{(2)}$-Popov form; let $\minDegs^{(2)}$ denote its
$\shifts^{(2)}$-pivot degree.

\vspace{-0.1cm}
\begin{lem}
  \label{lem:one:obtain_degrees}
  Let $\modulus[n] = X^{-2\degMod} (\subVec{\row{p}}{i+1}{}
  \subVec{\sys}{i+1}{} + q \modulus) \in \polRing$ and $\mat{G} = X^{-2\degMod}
  (\subMat{\mat{A}}{i+1}{}{}{} \sys + \subVec{\col{q}}{i+1}{}
  \modulus) \in \polMatSpace[(\nbun-i)][1]$. Then, $\deg(\mat{G}) <
  \deg(\mathfrak{n}) \le \degMod - \sumVec{\minDegs^{(1)}} -
  \sumVec{\minDegs^{(2)}}$.
  Let $\msb^{(3)}$ be the $\shifts[t]$-Popov solution basis for
  $(\modulus[n],\mat{G})$ with $\shifts[t] = \rdeg[\shifts^{(2)}]{\msb^{(2)}}$
  and $\minDegs^{(3)}$ be its $\shifts[t]$-pivot degree. Then,
  $(\subVec{\minDegs}{}{i},\subVec{\minDegs}{i+1}{}) =
  (\minDegs^{(1)},\minDegs^{(2)} + \minDegs^{(3)})$.
\end{lem}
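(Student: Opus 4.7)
The plan is to verify the lemma in two stages: first establish the degree bounds on $\modulus[n]$ and $\mat{G}$, then prove the pivot-degree identity by factoring $\subMat{\msb}{i+1}{}{i+1}{} = \msb^{(3)}\msb^{(2)}$ and invoking the shifted Popov composition law.

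For the degree claims, both $\modulus[n]$ and $\mat{G}$ are polynomials by the order-basis congruences modulo $X^{2\degMod}$. The pivot $q$ is monic with $\deg(q) \ge \degMod$ by Lemma~\ref{lem:one:order_basis_shape}, and the choice of shift $\shifts[u] = (\shifts[v],\min(\shifts[d]))$ forces $\deg(\row{p}) \le \deg(q)$; with $\deg(\sys) < \degMod$, the term $q\modulus$ strictly dominates and $\deg(\modulus[n]) = \deg(q) - \degMod$. Since the sum of $\shifts[u]$-pivot degrees of the order basis is bounded by the order $2\degMod$, one has $\sumVec{\minDegs^{(1)}} + \sumVec{\minDegs^{(2)}} + \deg(q) \le 2\degMod$, yielding the stated upper bound on $\deg(\modulus[n])$. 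For $\deg(\mat{G}) < \deg(\modulus[n])$, I combine column-wise Popov bounds: each $q^{(2)}_k$ is a nonpivot in the last column so $\deg(q^{(2)}_k) < \deg(q)$, and each $\mat{A}_{i+k,j}$ satisfies $\deg(\mat{A}_{i+k,j}) < \minDegs^{(1)}_m$ or $\le \minDegs^{(2)}_{k'}$ depending on the pivot column containing $j$. The numerical crux is that $\deg(q) \ge \degMod$ forces $\sumVec{\minDegs^{(1)}} + \sumVec{\minDegs^{(2)}} \le \degMod$, so $\max(\minDegs^{(1)})$ and $\max(\minDegs^{(2)})$ are both $\le \deg(q)$; a term-by-term estimate then gives $\deg(\mat{A}_{i+k,\cdot}\sys + q^{(2)}_k\modulus) < \deg(q) + \degMod$, hence $\deg(\mat{G}_k) < \deg(q) - \degMod = \deg(\modulus[n])$.

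For the factorization, define $M \subseteq \polRing^{1\times(\nbun-i)}$ as the set of $\row{r}'$ admitting an extension $[\row{r}''\,|\,\row{r}']$ (in the ordering dictated by $\perm$) that is a solution of $(\modulus,\sys)$. Since $\subMat{\msb}{}{i}{i+1}{} = \mat{0}$ by Lemma~\ref{lem:one:splitting_index}, $\subMat{\msb}{i+1}{}{i+1}{}$ is manifestly the $\shifts^{(2)}$-Popov basis of $M$. For $\msb^{(3)}\msb^{(2)}$ to be the same basis, I would verify both inclusions. Given a row $\boldsymbol{\mu}$ of $\msb^{(3)}$ with $\boldsymbol{\mu}\mat{G} = r\modulus[n]$, the combination $\boldsymbol{\mu}[\subMat{\mat{A}}{i+1}{}{}{i}\,|\,\msb^{(2)}\,|\,\subVec{\col{q}}{i+1}{}] - r[\subVec{\row{p}}{}{i}\,|\,0\,|\,q]$ annihilates $\trsp{[\trsp{\sys}\,|\,\modulus]}$ by the very definitions of $\modulus[n]$ and $\mat{G}$, so its first $\nbun$ coordinates furnish a witness showing $\boldsymbol{\mu}\msb^{(2)} \in M$. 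The converse direction is the harder half: starting from $\row{r}' \in M$ with witness $\row{r}''$, one exploits that the order basis spans all pairs $(\row{a},c)$ with $\row{a}\sys + c\modulus \equiv 0 \pmod{X^{2\degMod}}$ and controlled $\shifts[u]$-degree, to express the projection on the $\msb^{(2)}$-block as a combination of rows of $\msb^{(2)}$ whose ``residual'' relative to $(\modulus[n],\mat{G})$ lies in the row space of $\msb^{(3)}$. Uniqueness of the $\shifts^{(2)}$-Popov form then forces $\subMat{\msb}{i+1}{}{i+1}{} = \msb^{(3)}\msb^{(2)}$.

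With the factorization established, the composition law for shifted Popov forms applies: since $\msb^{(2)}$ is $\shifts^{(2)}$-Popov with diagonal pivots, $\shifts[t] = \rdeg[\shifts^{(2)}]{\msb^{(2)}}$ equals $\minDegs^{(2)} + \shifts^{(2)}$ componentwise, and the product of the $\shifts^{(2)}$-Popov matrix $\msb^{(2)}$ (pivot degree $\minDegs^{(2)}$) with the $\shifts[t]$-Popov matrix $\msb^{(3)}$ (pivot degree $\minDegs^{(3)}$) is $\shifts^{(2)}$-Popov with pivot degree $\minDegs^{(2)} + \minDegs^{(3)}$ componentwise. Combined with $\subVec{\minDegs}{}{i} = \minDegs^{(1)}$ from Lemma~\ref{lem:one:splitting_index}, this closes the argument. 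The main obstacle is the converse inclusion $M \subseteq {}$row space of $\msb^{(3)}\msb^{(2)}$: one must show that every extension-compatible $\row{r}'$ genuinely factors through $\msb^{(2)}$ and the residual system $(\modulus[n],\mat{G})$, which critically relies on the order $2\degMod$ being large enough for the order basis to capture all near-solutions of sufficiently small $\shifts[u]$-degree.
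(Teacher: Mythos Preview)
Your degree-bound argument matches the paper's. The second half, however, takes a genuinely different route and contains one real slip.

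\textbf{The slip.} The product $\msb^{(3)}\msb^{(2)}$ is in general \emph{not} in $\shifts^{(2)}$-Popov form --- products of Popov matrices need not be Popov --- so ``uniqueness of the $\shifts^{(2)}$-Popov form then forces $\subMat{\msb}{i+1}{}{i+1}{} = \msb^{(3)}\msb^{(2)}$'' is false as stated. What you do get is that $\msb^{(3)}\msb^{(2)}$ is $\shifts^{(2)}$-\emph{reduced} (the leading matrices multiply), with $\shifts^{(2)}$-row degree $\shifts^{(2)}+\minDegs^{(2)}+\minDegs^{(3)}$. Once you have shown that $\msb^{(3)}\msb^{(2)}$ and $\subMat{\msb}{i+1}{}{i+1}{}$ are bases of the same module $M$, conclude instead that the $\shifts^{(2)}$-Popov form of $\msb^{(3)}\msb^{(2)}$ equals $\subMat{\msb}{i+1}{}{i+1}{}$, and then read off its pivot degree $\minDegs^{(2)}+\minDegs^{(3)}$ from the reducedness (this is exactly the ``composition law'' you want, and is what \cite[Section~3]{JeNeScVi16} records; the paper invokes the same reference).

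\textbf{Comparison with the paper.} The paper does not argue via your module $M$ at all. Instead it packages $\msb^{(3)}$ together with its quotient column $\col{q}^{(3)}$ into a block $[\mat{B}\,|\,\col{c}]$, and applies \cite[Theorem~3.9]{ZhLaSt12} as a black box: multiplying an order basis on the left by a minimal nullspace basis of its residual gives a minimal nullspace basis of the original column. Lemma~\ref{lem:mnb_solbas} then translates this into a $\shifts[v]$-minimal (not Popov) solution basis $\bar{\msb}$ for $(\modulus,\sys)$, whose $\shifts[v]$-Popov form is $\msb$; the block shape of $\bar{\msb}$ and \cite[Section~3]{JeNeScVi16} give the pivot degree. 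Your approach is more self-contained but requires the two-inclusion argument; the paper's is shorter because the hard work is outsourced to the cited nullspace-basis theorem.

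\textbf{On your ``main obstacle''.} The converse inclusion is easier than you fear and needs no subtle degree control. Given $\row{r}'\in M$ with extension $[\row{r}''\,|\,\row{r}']$ and quotient $c$, the vector $[\row{r}''\,|\,\row{r}'\,|\,c]$ lies in the nullspace of $\trsp{[\trsp{\sys}\,|\,\modulus]}$, hence in the row space of the order basis; write it as $[\boldsymbol{\alpha}\,|\,\boldsymbol{\beta}\,|\,\gamma]$ times the order basis. The block zeros of Lemma~\ref{lem:one:order_basis_shape} force $\row{r}'=\boldsymbol{\beta}\,\msb^{(2)}$. Now apply the same coefficient vector to $X^{-2\degMod}\trsp{[\trsp{\sys}\,|\,\modulus]}$: the $\boldsymbol{\alpha}$-block contributes $0$ because $[\msb^{(1)}\,|\,\col{q}^{(1)}]$ is already in the exact nullspace, and you are left with $\boldsymbol{\beta}\,\mat{G}+\gamma\,\modulus[n]=0$, i.e.\ $\boldsymbol{\beta}$ is a solution for $(\modulus[n],\mat{G})$ and hence in the row space of $\msb^{(3)}$. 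No appeal to ``the order $2\degMod$ being large enough'' is needed here.
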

\begin{proof}
  \vspace{-0.1cm}
  The sum $\sumVec{\minDegs^{(1)}} + \sumVec{\minDegs^{(2)}} + \deg(q)$ of the
  $\shifts[u]$-pivot degrees of 
  $\left[ \begin{smallmatrix}
    \mat{A} & \col{q} \\
    \row{p} & q 
  \end{smallmatrix} \right]$
  is at most the order $2\degMod$.
  Thus, we have $\deg(\modulus[n]) = \deg(q) - \degMod \le \degMod -
  \sumVec{\minDegs^{(1)}} - \sumVec{\minDegs^{(2)}}$,
  $\deg(\subMat{\mat{A}}{i+1}{}{}{i}) < \sumVec{\minDegs^{(1)}} \le \degMod$,
  $\deg(\subMat{\mat{A}}{i+1}{}{i+1}{}) \le \sumVec{\minDegs^{(2)}} \le
  \degMod$, and $\deg(\subVec{\col{q}}{i+1}{}) < \deg(q)$. This implies
  $\deg(\mat{G}) < \deg(q) - \degMod = \deg(\modulus[n])$.

  Let $\col{q}^{(3)} = -\msb^{(3)} \mat{G} / \modulus[n]$ and $t =
  \rdeg[{\shifts[u]}]{[\row{p} | q]} = \deg(q) + \min(\shifts[d])
  \le \min(\shifts^{(2)}) \le \min(\shifts[t])$. By Lemma~\ref{lem:mnb_solbas},
  $[\msb^{(3)} | \col{q}^{(3)}]$ is the $(\shifts[t],t)$-Popov nullspace basis
  for $\trsp{[\trsp{\mat{G}} | \modulus[n]]}$. Defining $\mat{B} \in
  \polMatSpace[\nbun]$ and $\col{c} \in \polMatSpace[\nbun][1]$ by
  $\left[\begin{smallmatrix} \subMat{\mat{B}}{}{i}{}{i} &
      \subMat{\mat{B}}{}{i}{i+1}{} & \subVec{\col{c}}{}{i} \\
      \subMat{\mat{B}}{i+1}{}{}{i} & \subMat{\mat{B}}{i+1}{}{i+1}{} &
      \subVec{\col{c}}{i+1}{}
  \end{smallmatrix} \right] =
  \left[ \begin{smallmatrix}
    \idMat[] & \mat{0} & \mat{0} \\
    \mat{0} & \msb^{(3)} & \col{q}^{(3)}
  \end{smallmatrix} \right]$,
  then
  $\left[\begin{smallmatrix} \mat{B} \;\; \col{c} \end{smallmatrix} \right]
  \left[ \begin{smallmatrix}
    \mat{A} & \col{q} \\
    \row{p} & q 
  \end{smallmatrix} \right]$
  is a $\shifts[u]$-minimal nullspace basis of $\trsp{[\trsp{\sys} |
  \modulus]}$~\cite[Theorem~3.9]{ZhLaSt12}. Thus Lemma~\ref{lem:mnb_solbas}
  implies that
  $\bar{\msb} = \left[\begin{smallmatrix} \mat{B} \;\; \col{c} \end{smallmatrix} \right]
  \left[ \begin{smallmatrix}
    \mat{A}  \\
    \row{p}
  \end{smallmatrix} \right]$
  is a $\shifts[v]$-minimal solution basis for $(\modulus,\sys)$.

  It is easily checked that $\msb$ is in $\shifts[v]$-Popov form, so that the
  $\shifts[v]$-Popov form of $\bar{\msb}$ is $\msb$ and its $\shifts[v]$-pivot
  degree is $\minDegs$. Besides $\left[\begin{smallmatrix}
      \subMat{\bar{\msb}}{}{i}{}{i} & \subMat{\bar{\msb}}{}{i}{i+1}{} \\
      \subMat{\bar{\msb}}{i+1}{}{}{i} & \subMat{\bar{\msb}}{i+1}{}{i+1}{}
  \end{smallmatrix}\right]
      = \left[\begin{smallmatrix} \msb^{(1)} & \mat{0} \\ \msb^{(3)} \mat{A}_{2,1} +
      \col{q}^{(3)} \mat{A}_{3,1} & \msb^{(3)} \msb^{(2)}
  \end{smallmatrix}\right]$,
  so that 
  $(\subVec{\minDegs}{}{i},\subVec{\minDegs}{i+1}{}) = (\minDegs^{(1)},\minDegs^{(2)}+\minDegs^{(3)})$
  \cite[Section 3]{JeNeScVi16}.
\end{proof}
\vspace{-0.2cm}

This results in Algorithm~\ref{algo:modsysone}. It takes as input $\alpha$
which dictates the amplitude of the subtuples that partition $\shifts$; as
mentioned above, the initial call can be made with $\alpha = 2\degMod$.

\begin{figure}[h!]
\centering
\fbox{\begin{minipage}{8.2cm}
\begin{algo} [\algoname{PolModSysOne}] 
  \label{algo:modsysone}
  \normalfont
  ~\\
	\emph{Input:} 
      a polynomial $\modulus \in \modSpace$ of degree $\degMod$,
      a column $\sys \in \polMatSpace[\nbun][1]$ with $\deg(\sys) < \deg(\modulus)$,
      a shift $\shifts\in\shiftSpace$,
      a parameter $\alpha \in \ZZp$ with $\alpha \ge 2\degMod$.

  \smallskip
  \emph{Output:} the $\shifts$-Popov solution basis
  for $(\modulus,\sys)$ and the $\shifts$-minimal degree $\minDegs$ of
  $(\modulus,\sys)$.
  \vspace{-0.1cm}
	\begin{enumerate}[{\bf 1.}] 
      \setlength{\itemsep}{0cm}
    \item If $\amp \le 2\alpha$:
      \vspace{-0.1cm}
      \begin{enumerate}[{\bf a.}]
      \setlength{\itemsep}{0cm}
        \item $\mat{A}$ $\leftarrow$ the $(\shifts,\min(\shifts))$-Popov order
          basis for $\trsp{[\trsp{\sys} | \modulus]}$ and $2\alpha + 2\degMod$; return
          the principal $\nbun \times \nbun$ submatrix of $\mat{A}$ and the
          degrees of its diagonal entries
      \end{enumerate}
      \vspace{-0.1cm}
    \item Else: \hfill \texttt{/* $\ell = 1 + \lfloor \amp / \alpha \rfloor \ge 3$ */}
      \vspace{-0.1cm}
      \begin{enumerate}[{\bf a.}]
      \setlength{\itemsep}{0cm}
        \item $(\shifts_1,\ldots,\shifts_\ell) \leftarrow \algoname{Partition}(\shifts,\alpha)$, \\
          $j \leftarrow$ sum of the lengths of $\shifts_1,\ldots,\shifts_{\lceil \ell/2 \rceil}$, $\shifts^{(0)} \leftarrow \subVec{\shifts}{}{j}$, \\
          $(\msb^{(0)},\minDegs^{(0)}) \leftarrow \algoname{PolModSysOne}( \modulus, \subVec{\sys}{}{j}, \shifts^{(0)}, \alpha )$
        \item $i \leftarrow$ the largest splitting index for $(\minDegs^{(0)},\shifts^{(0)})$,
          $\minDegs^{(1)} \leftarrow \subVec{\minDegs^{(0)}}{}{i}$, $\shifts^{(2)} \leftarrow \subVec{\shifts}{i+1}{}$,
          $\shifts[d] = -\minDegs^{(1)}+\min(\shifts^{(2)})-2\degMod$,
          $\shifts[v] \in \shiftSpace$ with $[\subVec{\shifts[v]}{}{i} |
          \subVec{\shifts[v]}{i+1}{}] \leftarrow [\shifts[d] | \shifts^{(2)}]$,
          $\shifts[u] = (\shifts[v],\min(\shifts[d]))$
        \item $\left[ \begin{smallmatrix}
            \mat{A} & \col{q} \\
            \row{p} & q 
          \end{smallmatrix} \right]
          \leftarrow$ $\shifts[u]$-Popov order basis for $\trsp{[\trsp{\sys} | \modulus]}$
          and $2\degMod$,
          $\minDegs^{(2)} \leftarrow$ the $\shifts^{(2)}$-pivot degree of $\subMat{\mat{A}}{i+1}{}{i+1}{}$ \\
          $\mat{G} \leftarrow X^{-2\degMod}
          (\subMat{\mat{A}}{i+1}{}{}{} \sys + \subVec{\col{q}}{i+1}{}
          \modulus)$, \\
          $\modulus[n] \leftarrow X^{-2\degMod} (\subVec{\row{p}}{i+1}{}
          \subVec{\sys}{i+1}{} + q \modulus)$. 
        \item $\shifts[t] \leftarrow \shifts^{(2)} + \minDegs^{(2)} = \rdeg[\shifts^{(2)}]{\subMat{\mat{A}}{i+1}{}{i+1}{}}$, \\
          $(\msb^{(3)},\minDegs^{(3)}) \leftarrow \algoname{PolModSysOne}( \modulus[n], \mat{G}, \shifts[t], \alpha )$
        \item $\minDegs \in \NN^\nbun$ with $(\subVec{\minDegs}{}{i},\subVec{\minDegs}{i+1}{}) \leftarrow
          (\minDegs^{(1)},\minDegs^{(2)} + \minDegs^{(3)})$  \\
          $\msb \leftarrow \algoname{KnownDegPolModSys}(\modulus,\sys,\shifts,\minDegs)$
        \item Return $(\msb,\minDegs)$
      \end{enumerate}
      \vspace{-0.1cm}
	\end{enumerate}
\end{algo}
\end{minipage}} 
  \vspace{-0.5cm}
\end{figure}

\vspace{-0.2cm}
\begin{prop}
  \label{prop:modsys_one}
  Algorithm \algoname{PolModSysOne} is correct and uses
  $\softO{\nbun^{\expmatmul-1} \degMod}$ operations in $\field$.
\end{prop}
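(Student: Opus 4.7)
The plan is to establish correctness by induction on the amplitude $\amp$ of the input shift, and to analyze the cost via the resulting two-way recurrence.

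For correctness, the base case $\amp \le 2\alpha$ is immediate from Lemma~\ref{lem:modsys_one_small}: the order $\tau = 2\alpha + 2\degMod$ satisfies $\tau \ge \amp + 2\degMod$, so the principal $\nbun \times \nbun$ submatrix of the $(\shifts,\min(\shifts))$-Popov order basis is the sought $\shifts$-Popov solution basis, and its $\shifts$-pivot degree is read off from the degrees of its diagonal entries. For the recursive case $\amp > 2\alpha$, termination follows from strict amplitude decrease on both branches: since $\ell \ge 3$, the shift $\shifts^{(0)}$ has amplitude at most $\lceil \ell/2 \rceil \alpha - 1 < \amp$, and a similar bound holds for the shift $\shifts[t]$ driving the second recursive call.

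The recursive step chains the three new lemmas. Under the standing WLOG assumption that consecutive sorted entries of $\shifts$ differ by at most $\degMod$, the partition argument given before the lemma guarantees that $\shifts^{(0)}$ contains a splitting index for $(\minDegs^{(0)},\shifts^{(0)})$, so the index $i$ in Step~2b is well defined. Lemma~\ref{lem:one:splitting_index} then promotes $i$ to a splitting index for $(\minDegs,\shifts)$ and identifies $\subVec{\minDegs}{}{i} = \minDegs^{(1)}$ via $\msb^{(0)}$. Lemma~\ref{lem:one:order_basis_shape} shows that the order basis computed at Step~2c has the expected block-triangular shape, yielding $\minDegs^{(2)}$ and the residual $(\modulus[n],\mat{G})$ with $\deg(\mat{G}) < \deg(\modulus[n])$, so the second recursive call is well defined. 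Lemma~\ref{lem:one:obtain_degrees} then shows that combining $\minDegs^{(1)}$, $\minDegs^{(2)}$, and $\minDegs^{(3)}$ as in Step~2e yields the full $\shifts$-minimal degree $\minDegs$. Finally, since $\minDegs$ is now known, Proposition~\ref{prop:algo:knownmindeg_modsys} guarantees that \algoname{KnownDegPolModSys} returns the correct $\shifts$-Popov solution basis.

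For the cost, each non-recursive step runs in $\softO{\nbun^{\expmatmul-1} \degMod}$: the order basis at Step~1a has order $2\alpha + 2\degMod \in \bigO{\degMod}$ since $\alpha$ remains fixed at its initial value $2\degMod$ throughout the recursion; the order basis at Step~2c has order $2\degMod$; and \algoname{KnownDegPolModSys} fits by Proposition~\ref{prop:algo:knownmindeg_modsys}. Since $\amp \in \bigO{\nbun \degMod}$ may be assumed WLOG, the recursion depth is $\bigO{\log(\amp/\alpha)} = \bigO{\log \nbun}$. The main obstacle is bounding the total work across the binary recursion tree: the two children have dimensions $j$ and $\nbun-i$ whose sum is $\nbun + (j-i)$, and although $j-i$ may be nonzero, it is bounded by the length of a single partition subtuple, because $\shifts_{\lceil \ell/2\rceil - 1}$ must already contain a splitting index, forcing $i$ to lie in or beyond that subtuple. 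This controls the growth of total dimension across levels, and together with the inequality $\sum_k \nbun_k^{\expmatmul-1} \le (\sum_k \nbun_k)^{\expmatmul-1}$ valid for $\expmatmul \ge 2$, bounds the per-level cost by $\softO{\nbun^{\expmatmul-1} \degMod}$; summing over the $\softO{1}$ levels gives the claimed total.
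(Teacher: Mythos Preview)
Your argument tracks the paper's closely, and your correctness discussion is in fact more explicit than the paper's terse ``follows from the results in this subsection.'' There is, however, one cost component you skip that the paper treats explicitly: the computation of the residual $\mat{G} = X^{-2\degMod}(\subMat{\mat{A}}{i+1}{}{}{}\sys + \subVec{\col{q}}{i+1}{}\modulus)$ and $\modulus[n]$ in Step~\textbf{2.c}. The matrix $\subMat{\mat{A}}{i+1}{}{}{}$, being part of a Popov order basis at order $2\degMod$, has column degrees summing to at most $2\degMod$, but an individual column may have degree up to $2\degMod$. Multiplying it naively by $\sys$ (whose entries have degree $<\degMod$) costs $\softO{\nbun^2\degMod}$, which exceeds the target $\softO{\nbun^{\expmatmul-1}\degMod}$ when $\expmatmul < 3$. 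The paper resolves this via partial linearization, forward-referencing Lemma~\ref{lem:residual}; you should mention this step.

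On your dimension-control paragraph: the observation that $\shifts_{\lceil\ell/2\rceil-1}$ already contains a splitting index, forcing $i$ to lie in or beyond that subtuple, is the right idea, but the conclusion that $j-i$ is bounded by the length of ``a single partition subtuple'' does not by itself give an $\bigO{\nbun}$ bound on total dimension per level, since subtuples are defined by value ranges and can contain arbitrarily many indices. The paper's proof is no more rigorous here---it simply asserts that the per-level dimension sum is $\bigO{\nbun}$---so you are not worse off, but a complete argument would need to track how the overlaps accumulate over the $\bigO{\log\ell}$ levels. Also note the paper's WLOG bound on $\amp$ is $\bigO{\nbun^2\degMod}$ rather than $\bigO{\nbun\degMod}$, though this makes no difference for the logarithmic depth.
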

\begin{proof}
  \vspace{-0.1cm}
  The correctness follows from the results in this subsection.
  By~\cite[Theorem~1.4]{JeNeScVi16}, each leaf of the recursion
  at Step~\textbf{1.a} in dimension $\nbun$ uses $\softO{\nbun^{\expmatmul-1}
  \alpha}$ operations.

  Running the algorithm with initial input $\alpha = 2\degMod$, the recursive
  tree has depth $\bigO{\log(\ell)} = \bigO{\log(1 + \amp / 2\degMod)}$, with
  $\amp / 2\degMod \in \bigO{\nbun^2}$ \cite[Appendix A]{JeNeScVi16}. All
  recursive calls are for a modulus of degree $\degMod < \alpha$.  The order
  basis computation at Step~\textbf{2.c} uses $\softO{\nbun^{\expmatmul-1}
  \degMod}$ operations; the computation of $\mat{G}$ and $\modulus[n]$ at
  Step~\textbf{2.c} can be done in time $\softO{\nbun^{\expmatmul-1} \degMod}$
  using partial linearization as in Lemma~\ref{lem:residual} below;
  Step~\textbf{2.e} uses $\softO{\nbun^{\expmatmul-1} \degMod}$ operations by
  Proposition~\ref{prop:algo:knownmindeg_modsys}.

 On a given level of the tree, the sum of the dimensions of the column vector
 in input of each sub-problem is in $\bigO{\nbun}$. Since $a^{\expmatmul-1} +
 b^{\expmatmul-1} \le (a+b)^{\expmatmul-1}$ for all $a,b>0$, each level of the
 tree uses a total of $\softO{\nbun^{\expmatmul-1} \alpha}$ operations.
 \vspace{-0.1cm}
\end{proof}

\subsection{Fast divide-and-conquer algorithm}
\label{subsec:algo:modsys}

Now that we have an efficient algorithm for $\nbeq=1$, our main algorithm uses
a divide-and-conquer approach on $\nbeq$. Similarly to~\cite[Algorithm
1]{JeNeScVi16}, from the two bases obtained recursively we first deduce the
$\shifts$-minimal degree $\minDegs$, and then we use this knowledge to compute
$\msb$ with Algorithm~\ref{algo:knownmindeg_modsys}. When $\degMod =
\deg(\modulus_1) + \cdots + \deg(\modulus_\nbeq) \in \bigO{\nbun}$, we rely on
the algorithm \algoname{LinearizationMIB} in~\cite[Algorithm~9]{JeNeScVi15}.

\begin{figure}[h!]
\centering
\fbox{\begin{minipage}{8.2cm}
\begin{algo} [\algoname{PolModSys}] 
\label{algo:modsys}
  \normalfont
~\\
\emph{Input:} 
    polynomials $\Modulus = (\modulus_1,\ldots,\modulus_\nbeq) \in
    \modSpace^\nbeq$, a matrix $\sys \in \sysSpace$ with
    $\deg(\matcol{\sys}{j}) < \deg(\modulus_j)$, a shift
    $\shifts\in\shiftSpace$.

\smallskip
\emph{Output:} the $\shifts$-Popov solution basis
for $(\Modulus,\sys)$ and the $\shifts$-minimal degree $\minDegs$ of $(\Modulus,\sys)$.
\vspace{-0.1cm}
\begin{enumerate}[{\bf 1.}] 
  \setlength{\itemsep}{0cm}
  \item If $\degMod = \deg(\modulus_1) + \cdots + \deg(\modulus_\nbeq) \le \nbun$:
    \vspace{-0.1cm}
    \begin{enumerate}[{\bf a.}]
      \setlength{\itemsep}{0cm}
      \item Build $\evMat \in \matSpace[\rdim][\degMod]$ and $\mulmat \in \matSpace[\degMod]$ as in Section~\ref{subsec:modsys}
      \item Return $\algoname{LinearizationMIB}(\evMat,\mulmat,\shifts,2^{\lceil\log_2(\degMod)\rceil})$
    \end{enumerate}
    \vspace{-0.1cm}
  \item Else if $\nbeq = 1$:
    Return $\algoname{PolModSysOne}(\modulus_1,\sys,\shifts,2\degMod)$
  \item Else:
    \vspace{-0.1cm}
    \begin{enumerate}[{\bf a.}]
      \setlength{\itemsep}{0cm}
      \item $\Modulus^{(1)}, \sys^{(1)} \leftarrow (\modulus_1,\ldots,\modulus_{\lfloor \nbeq/2 \rfloor})$, $\matcol{\sys}{1\ldots \lfloor\nbeq/2\rfloor}$ \\
        $\Modulus^{(2)}, \sys^{(2)} \leftarrow (\modulus_{\lfloor \cdim/2 \rfloor+1},\ldots,\modulus_\cdim)$, $\matcol{\sys}{\lfloor\nbeq/2\rfloor+1 \ldots \nbeq}$
      \item $\msb^{(1)},\minDegs^{(1)}$ $\leftarrow$ $\algoname{PolModSys}(\Modulus^{(1)},\sys^{(1)},\shifts)$
      \item $\mat{R} \leftarrow$ $\msb^{(1)} \sys^{(2)} \bmod \Modulus^{(2)}$
      \item $\msb^{(2)},\minDegs^{(2)}$ $\leftarrow$ $\algoname{PolModSys}(\Modulus^{(2)},\mat{R},\rdeg[\shifts]{\msb^{(1)}})$
      \item $\msb \leftarrow \algoname{KnownDegPolModSys}(\Modulus,\sys,\shifts,\minDegs^{(1)}+\minDegs^{(2)})$ 
      \item Return $(\msb,\minDegs^{(1)}+\minDegs^{(2)})$
    \end{enumerate}
    \vspace{-0.1cm}
\end{enumerate}
\end{algo}
\end{minipage}} 
\vspace{-0.4cm}
\end{figure}

The computation of the so-called \emph{residual} at Step \textbf{3.c} can be
done efficiently using partial linearization, as follows.

\vspace{-0.15cm}
\begin{lem}
\label{lem:residual}
Let $\Modulus = (\modulus_j)_j \in \modSpace^\cdim$, $\mat{P} \in
\polMatSpace[\rdim]$, $\mat{F} \in \polMatSpace[\rdim][\cdim]$ with $\rdim \ge
\cdim$ and $\deg(\matcol{\mat{F}}{j}) < \degMod_j = \deg(\modulus_j)$, and let
$\degMod \ge \rdim$ such that $\degMod \ge \degMod_1 + \cdots + \degMod_\cdim$
and $\sumVec{\cdeg{\mat{P}}} \le \degMod$. Then $\mat{P} \mat{F} \bmod
\Modulus$ can be computed in $\softO{\rdim^{\expmatmul-1} \degMod}$ operations.
\end{lem}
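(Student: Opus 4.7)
The plan is to apply partial linearization twice---once on the columns of $\mat{P}$ and once on the columns of the intermediate residual---so that the computation reduces to a single multiplication of polynomial matrices of dimensions $\bigO{\rdim}$ and degree $\bigO{\degExp}$, where $\degExp = \lceil\degMod/\rdim\rceil$. Throughout, the key property I would exploit is that the expansion matrices have entries that are either zero or a single monomial, which lets column-wise modular reduction commute through these linearizations.

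First, since $\sumVec{\cdeg{\mat{P}}} \le \degMod \le \rdim\degExp$, I would split each column of $\mat{P}$ into successive slices of degree less than $\degExp$, yielding a factorization $\mat{P} = \expand{\mat{P}}\,\expandMat$ where $\expand{\mat{P}} \in \polMatSpace[\rdim][\expand{\rdim}]$ has degree less than $\degExp$ with $\expand{\rdim} \le \rdim + \degMod/\degExp \le 2\rdim$, and $\expandMat \in \polMatSpace[\expand{\rdim}][\rdim]$ is block-diagonal with $j$-th block equal to the column $\trsp{[1,X^\degExp,X^{2\degExp},\ldots]}$. Then $\expandMat\mat{F}$ simply consists of the shifted rows $X^{k\degExp}\matrow{\mat{F}}{i}$ of $\mat{F}$, and I would form $\expand{\mat{F}} = \expandMat\mat{F} \bmod \Modulus$ by reducing each entry in column $j$ modulo $\modulus_j$ using fast polynomial division~\cite{vzGathen13}, at a total cost of $\softO{\rdim\degMod}$. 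The identity $\expand{\mat{P}}\expand{\mat{F}} \equiv \mat{P}\mat{F} \pmod{\Modulus}$ then holds column-wise because each column of $\mat{F}$ is reduced modulo a single polynomial $\modulus_j$ and the monomial structure of $\expandMat$ causes this reduction to commute through multiplication by $\expand{\mat{P}}$.

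After this step column $j$ of $\expand{\mat{F}}$ has degree less than $\degMod_j$, so $\sumVec{\cdeg{\expand{\mat{F}}}} \le \degMod$ and I would apply the same partial linearization column-wise to $\expand{\mat{F}}$, writing $\expand{\mat{F}} = \expand{\expand{\mat{F}}}\,\expandMat'$ with $\expand{\expand{\mat{F}}} \in \polMatSpace[\expand{\rdim}][\expand{\cdim}]$ of degree below $\degExp$ and $\expand{\cdim} \le \cdim + \degMod/\degExp \le 2\rdim$. The central computation $\expand{\mat{P}}\,\expand{\expand{\mat{F}}}$ is then the product of two polynomial matrices of dimensions $\bigO{\rdim}$ and degree less than $\degExp$, costing $\softO{\rdim^\expmatmul \degExp} = \softO{\rdim^{\expmatmul-1}\degMod}$ via fast matrix multiplication. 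Recomposing via $\expandMat'$ (shifted additions) and a final column-wise reduction modulo $\Modulus$ cost only $\softO{\rdim\degMod}$, so the total matches the claimed bound. The main obstacle to making the argument fully rigorous is the commutation between the partial-linearization factorizations and the column-wise modular reductions highlighted above; it follows directly from the monomial structure of $\expandMat$ and $\expandMat'$, but deserves a careful verification before the cost accounting can be concluded.
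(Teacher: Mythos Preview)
Your proposal is correct and follows essentially the same approach as the paper: column-linearize $\mat{P}$, reduce $\expandMat\mat{F}$ modulo $\Modulus$, column-linearize the resulting $\expand{\mat{F}}$, multiply the two $\bigO{\rdim}\times\bigO{\rdim}$ matrices of degree $\bigO{\lceil\degMod/\rdim\rceil}$, and reduce. The concern you flag as the ``main obstacle'' is not one: the identity $\expand{\mat{P}}\,\expand{\mat{F}} \equiv \expand{\mat{P}}\,\expandMat\,\mat{F} = \mat{P}\mat{F} \pmod{\Modulus}$ is plain congruence arithmetic column by column and needs no appeal to the monomial structure of $\expandMat$.
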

\begin{proof}
\vspace{-0.15cm}
Using notation from Lemma~\ref{lem:modsys_parlin}, we let $\expand{\mat{P}} \in
\polMatSpace[\rdim][\expand{\rdim}]$ such that $\mat{P} = \expand{\mat{P}}
\expandMat$ and $\deg(\expand{\mat{P}}) < \lceil \sumVec{\cdeg{\mat{P}}} /
\rdim \rceil$. As above, $\expand{\sys} = \expandMat \sys \bmod \Modulus$ can
be computed in time $\softO{\rdim \degMod}$. Here we want to compute $\mat{P}
\sys \bmod \Modulus = \expand{\mat{P}} \, \expand{\sys} \bmod \Modulus$.

We have $\deg(\expand{\mat{P}}) \le \lceil \degMod / \rdim \rceil \le 2
\degMod/\rdim$. Since $\sumVec{\cdeg{\expand{\mat{F}}}} < \degMod$ and $\cdim
\le \rdim \le \expand{\rdim} \le 2\rdim$, $\expand{\mat{F}}$ can be partially
linearized into $\bigO{\rdim}$ columns of degree $\bigO{\degMod/\rdim}$. Then,
$\expand{\mat{P}} \, \expand{\sys}$ is computed in $\softO{\rdim^{\expmatmul-1}
\degMod}$ operations.
The $j$-th column of $\expand{\mat{P}} \, \expand{\sys}$ has $\expand{\rdim}
\le 2\rdim$ rows and degree less than $\degMod_j + 2\degMod/\rdim$: it can be
reduced modulo $\modulus_j$ in $\softO{\degMod + \rdim \degMod_j}$
operations~\cite[Chapter~9]{vzGathen13}; summing over $1 \le j \le \cdim$ with
$\cdim \le \rdim$, this is in $\softO{\rdim \degMod}$.
\end{proof}

\begin{proof}[of Theorem~\ref{thm:modsys}]
The correctness and the cost $\softO{\nbun^{\expmatmul-1} \degMod}$ for
Steps~\textbf{1} and \textbf{2} of Algorithm~\ref{algo:modsys} follow
from~\cite[Theorem~1.4]{JeNeScVi15} and Proposition~\ref{prop:modsys_one}. With
the costs of Steps \textbf{3.c} and~\textbf{3.e} given in
Proposition~\ref{prop:algo:knownmindeg_modsys} and Lemma~\ref{lem:residual}, we
obtain the announced cost bound.

Now, using notation in Step~\textbf{3}, suppose $\msb^{(1)}$ and $\msb^{(2)}$
are the $\shifts$- and $\rdeg[\shifts]{\msb^{(1)}}$-Popov solution bases for
$(\Modulus^{(1)},\sys^{(1)})$ and $(\Modulus^{(2)},\mat{R})$. Then
$\msb^{(2)}\msb^{(1)}$ is a solution basis for $(\Modulus,\sys)$: if
$\row{p}$ is a solution for $(\Modulus,\sys)$, it is one for
$(\Modulus^{(1)},\sys^{(1)})$ and thus $\row{p} = \boldsymbol{\lambda}
\msb^{(1)}$ for some $\boldsymbol{\lambda}$, and it is one for
$(\Modulus^{(2)},\sys^{(2)})$ so that $\row{p} \sys^{(2)} =
\boldsymbol{\lambda} \msb^{(1)} \sys^{(2)} = \boldsymbol{\lambda} \mat{R} =
\row{0} \bmod \Modulus^{(2)}$ and thus $\boldsymbol{\lambda} =
\boldsymbol{\mu} \msb^{(2)}$ for some $\boldsymbol{\mu}$; then $\row{p} =
\boldsymbol{\mu} \msb^{(2)}\msb^{(1)}$.

Then $\msb^{(2)}\msb^{(1)}$ is an $\shifts$-minimal solution basis for
$(\Modulus,\sys)$ and its $\shifts$-Popov form has $\shifts$-pivot degree
$\minDegs^{(1)} + \minDegs^{(2)}$~\cite[Section 3]{JeNeScVi16}. The correctness
follows from Proposition~\ref{prop:algo:knownmindeg_modsys}.
\end{proof}

\section{Fast computation of the shifted Popov form of a matrix}
\label{sec:popov}

\vspace{-0.15cm}
\subsection{Fast shifted Popov form algorithm}
\label{subsec:algo:popov}

Our fast method for computing the $\shifts$-Popov form of a nonsingular
$\mat{A} \in \polMatSpace$ uses two steps, as follows.
\vspace{-0.15cm}
\begin{enumerate}[{\bf 1.}]
      \setlength{\itemsep}{0cm}
  \item Compute the Smith form of $\mat{A}$, giving the moduli $\Modulus$,
    and a corresponding right unimodular transformation, giving the equations
    $\sys$, so that $\mat{A}$ is a solution basis for $(\Modulus,\sys)$.
  \item Find the $\shifts$-Popov solution basis for $(\Modulus,\sys)$.
\end{enumerate}
We first show the correctness of this approach.

\vspace{-0.1cm}
\begin{lem}
  \label{lem:reduction}
  Let $\mat{A} \in \polMatSpace[\rdim]$ be nonsingular and $\smith = \mat{U}
  \mat{A} \mat{V}$ be the Smith form of $\mat{A}$ with $\mat{U}$ and $\mat{V}$
  unimodular. Let $\Modulus \in \modSpace^\rdim$ and $\sys \in
  \polMatSpace[\rdim]$ be such that $\smith = \diag{\Modulus}$ and $\sys =
  \mat{V} \bmod \Modulus$. Then $\mat{A}$ is a solution basis for
  $(\Modulus,\sys)$.
\end{lem}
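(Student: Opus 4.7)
The plan is a direct two-direction argument exploiting the unimodularity of $\mat{U}$ and $\mat{V}$. I first note that $\sys = \mat{V} \bmod \Modulus$ means $\sys = \mat{V} + \mat{Q}\diag{\Modulus}$ for some polynomial matrix $\mat{Q}$, so reductions modulo $\Modulus$ may freely swap $\sys$ for $\mat{V}$.

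Step one: show every row of $\mat{A}$ is a solution for $(\Modulus,\sys)$, i.e., $\mat{A}\sys \equiv \mat{0} \bmod \Modulus$. Since $\mat{U}$ is unimodular, $\mat{U}^{-1}$ lies in $\polMatSpace[\rdim]$, and the Smith relation gives $\mat{A}\mat{V} = \mat{U}^{-1}\smith = \mat{U}^{-1}\diag{\Modulus}$. Reducing modulo $\Modulus$ and using $\sys \equiv \mat{V}$, this yields $\mat{A}\sys \equiv \mat{A}\mat{V} \equiv \mat{0} \bmod \Modulus$, as required.

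Step two: show every solution lies in the row space of $\mat{A}$. Let $\row{p} \in \polMatSpace[1][\rdim]$ with $\row{p}\sys \equiv \mat{0} \bmod \Modulus$. Then $\row{p}\mat{V} \equiv \mat{0} \bmod \Modulus$, so there exists $\row{q} \in \polMatSpace[1][\rdim]$ with $\row{p}\mat{V} = \row{q}\diag{\Modulus} = \row{q}\smith = \row{q}\mat{U}\mat{A}\mat{V}$. Since $\mat{V}$ is unimodular we may right-multiply by $\mat{V}^{-1} \in \polMatSpace[\rdim]$ to conclude $\row{p} = \row{q}\mat{U}\mat{A}$, a $\polRing$-linear combination of the rows of $\mat{A}$.

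Step three: conclude. Nonsingularity of $\mat{A}$ guarantees that its $\rdim$ rows are $\polRing$-linearly independent. Combined with the two inclusions above, these rows form a basis of the solution module for $(\Modulus,\sys)$, which is exactly the definition of a solution basis. There is no real obstacle here; the only subtlety worth spelling out is that the polynomiality of $\mat{U}^{-1}$ and $\mat{V}^{-1}$ (needed in both steps) is automatic from unimodularity of $\mat{U}$ and $\mat{V}$.
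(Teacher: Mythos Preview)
Your proof is correct and follows essentially the same two-direction argument as the paper: both use $\mat{A}\mat{V} = \mat{U}^{-1}\smith$ with $\mat{U}^{-1}$ polynomial for one inclusion, and $\row{p}\mat{V} = \row{q}\smith$ together with the unimodularity of $\mat{V}$ for the other. You are simply more explicit about the swap between $\sys$ and $\mat{V}$ and about why the rows of $\mat{A}$ form a basis (nonsingularity), which the paper leaves implicit.
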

\begin{proof}
  Let $\row{p} \in \polMatSpace[1][\rdim]$. If $\row{p}$ is in the row space of
  $\mat{A}$ then $\row{p}$ is a solution for $(\Modulus,\sys)$ since $\mat{A}
  \mat{V} = \mat{U}^{-1} \smith$ with $\mat{U}^{-1}$ over $\polRing$. Now if
  $\row{p} \sys = 0 \bmod \Modulus$, then $\row{p} \mat{V} = \row{q} \smith$
  for some $\row{q}$ and $\row{p} = \row{q} \mat{U} \mat{A}$ is in the row
  space of $\mat{A}$.
\end{proof}
\vspace{-0.15cm}

Concerning the cost of Step~\textbf{1}, such $\Modulus$ and $\sys$ can be
obtained in expected $\softO{\rdim^\expmatmul \deg(\mat{A})}$ operations, by
computing
\begin{enumerate}[{\bf {1}.a}]
  \setlength{\itemsep}{0cm}
  \item $\mat{R}$ a row reduced form of $\mat{A}$ \cite[Theorem 18]{GuSaStVa12},
  \item $\diag{\Modulus}$ the Smith form of $\mat{R}$ \cite[Algorithm 12]{Storjohann03},
  \item $(\anyMat,\sys)$ a reduced Smith transform for $\mat{R}$ \cite[Figure 3.2]{Gupta11};
\end{enumerate}
as in~\cite[Figure 6.1]{Gupta11}, Steps~\textbf{1.b} and \textbf{1.c} should be
performed in conjunction with the preconditioning techniques detailed
in~\cite{KaKrSa90}. One may take for $\Modulus$ only the nontrivial Smith factors,
and for $\sys$ only the nonzero columns of the transform.

The product of the moduli in $\Modulus$ is $\det(\mat{A})$ so that the sum of
their degrees is $\deg(\det(\mat{A}))$. Then, according to
Theorem~\ref{thm:modsys}, Step~\textbf{2} of the algorithm outlined above costs
$\softO{\rdim^{\expmatmul-1} \deg(\det(\mat{A}))}$ operations. Thus this
algorithm solves Problem~\ref{pbm:popov} in expected $\softO{\rdim^\expmatmul
  \deg(\mat{A})}$ field operations.

\subsection{Reducing to almost uniform degrees}
\label{subsec:partial_linearization}

In this subsection, we use the partial linearization techniques
from~\cite[Section 6]{GuSaStVa12} to prove the following result.

\vspace{-0.2cm}
\begin{prop}
  \label{prop:parlin_reduction}
  Let $\mat{A} \in \polMatSpace[\rdim]$ be nonsingular and let $\shifts \in
  \shiftSpace$. With no field operation, one can build a nonsingular
  $\expand{\mat{A}} \in \polMatSpace[\expand{\rdim}]$ and a shift $\shifts[u]
  \in \shiftSpace[\expand{\rdim}]$ such that $\expand{\rdim} \le 3\rdim$,
  $\deg(\expand{\mat{A}}) \le \lceil \degDet / \rdim \rceil$, and the
  $\shifts$-Popov form of $\mat{A}$ is the principal $\rdim \times \rdim$
  submatrix of the $\shifts[u]$-Popov form of $\expand{\mat{A}}$.
\end{prop}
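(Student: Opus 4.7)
The plan is to apply the partial linearization technique of~\cite[Section~6]{GuSaStVa12}, adapted to arbitrary shifts. With $\degExp = \lceil \degDet/\rdim \rceil$, the goal is to build a nonsingular $\expand{\mat{A}}$ of dimension $\expand{\rdim}\le 3\rdim$ whose entries all have degree at most $\degExp$, together with a shift $\shifts[u]$ such that the $\shifts$-Popov form of $\mat{A}$ emerges as the principal $\rdim\times\rdim$ submatrix of the $\shifts[u]$-Popov form of $\expand{\mat{A}}$.

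First I would construct $\expand{\mat{A}}$ by splitting columns of $\mat{A}$ of degree exceeding $\degExp$ into shorter chunks of degree at most $\degExp$, linked by coupling rows of the form $X^\degExp \col{e}_k - \col{e}_{k+1}$ in the appropriate positions. The splitting must be driven not by the raw column degrees of $\mat{A}$---whose sum may vastly exceed $\degDet$---but by a column profile $(d_1,\ldots,d_\rdim)$ with $\sum_j d_j \le \degDet$ derived from a permutation attaining the maximum in~\eqref{eqn:degDet}. With such a profile the number of extra columns is at most $\lfloor \degDet/\degExp\rfloor \le \rdim$, and the same bound applies to the coupling rows introduced to preserve squareness, giving $\expand{\rdim}\le 3\rdim$; by construction every entry of $\expand{\mat{A}}$ has degree at most $\degExp$. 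Since the construction only manipulates coefficients of $\mat{A}$ combinatorially, no field operations are performed.

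Next, I would define $\shifts[u] = (\shifts,\nu,\ldots,\nu)\in\shiftSpace[\expand{\rdim}]$ for a large penalty $\nu$ (say $\nu = \max(\shifts) + \rdim\degExp + 1$), chosen so that no row of the $\shifts[u]$-Popov form of $\expand{\mat{A}}$ can have its $\shifts[u]$-pivot in the last $\expand{\rdim}-\rdim$ columns. This forces the pivot index of the first $\rdim$ rows to be $(1,\ldots,\rdim)$, so the principal $\rdim\times\rdim$ submatrix is in $\shifts$-Popov form. The row-space equivalence between $\mat{A}$ and the block structure of $\expand{\mat{A}}$---any row vector of $\mat{A}$ extends uniquely to a row vector of $\expand{\mat{A}}$ with zeros in the last positions, and conversely any such row of $\expand{\mat{A}}$ reassembles into one of $\mat{A}$---then ensures that this submatrix generates the row space of $\mat{A}$, and is therefore its $\shifts$-Popov form.

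The main obstacle is the counting and placement argument in the construction of $\expand{\mat{A}}$: one must choose the profile $(d_1,\ldots,d_\rdim)$ and the coupling rows so that the entrywise degree bound $\degExp$ holds simultaneously with $\expand{\rdim}\le 3\rdim$, even though individual entries of $\mat{A}$ may have degree much larger than any $d_j$. This is the technical heart of~\cite[Section~6]{GuSaStVa12}, which I would import and adapt. The additional check specific to the shifted case---that the large penalty $\nu$ indeed forces the claimed pivot structure and that the $\shifts$-pivot normalization of the Popov form of $\mat{A}$ is reflected in the principal block of the $\shifts[u]$-Popov form of $\expand{\mat{A}}$---is then routine from the pivot definitions.
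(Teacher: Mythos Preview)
Your plan has the right spirit but a real gap in the construction of $\expand{\mat{A}}$. You describe a single column partial linearization driven by a profile $(d_1,\ldots,d_\rdim)$ with $\sum_j d_j \le \degDet$, and then assert that every entry of $\expand{\mat{A}}$ has degree at most $\degExp$. That assertion fails: if the actual degree of column $j$ exceeds $d_j$, splitting that column into only $\lceil d_j/\degExp\rceil$ chunks leaves a residual chunk of degree larger than $\degExp$. Splitting each column according to its true column degree would recover the degree bound, but $\sumVec{\cdeg{\mat{A}}}$ is not controlled by $\degDet$, so you would lose $\expand{\rdim}\le 3\rdim$. You cannot get both bounds from a single column pass, and ``importing and adapting'' \cite[Section~6]{GuSaStVa12} does not mean choosing a clever column profile; it means changing the construction.

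What the paper actually does, following~\cite[Section~6]{GuSaStVa12}, is a \emph{two-stage} linearization: first a row partial linearization $\rowParLin{\minDegs}{\mat{A}}$ driven by the diagonal profile of a suitable permutation $\pi_1\mat{A}\pi_2$ (this profile sums to at most $\degDet$, so at most $\rdim$ extra rows are introduced), and then a column partial linearization of the result, now driven by its actual column degrees. It is precisely the combination of the permutation and the row step that makes the column degrees of the intermediate matrix controlled, so that the second step adds at most another $\rdim$ columns and achieves $\deg(\expand{\mat{A}})\le\lceil\degDet/\rdim\rceil$; this is the content of~\cite[Corollary~3]{GuSaStVa12}. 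The two-step process is also why the target dimension is $3\rdim$ rather than $2\rdim$; your own accounting with a single column step gives only $2\rdim$, which should have signaled that a step was missing. Your treatment of the shift $\shifts[u]=(\shifts,\nu,\ldots,\nu)$ and of the resulting pivot structure is essentially correct and matches Lemma~\ref{lem:parlin_reduction}, but note that this preservation argument must be invoked twice, once for each linearization (parts~(i) and~(iii) of that lemma).
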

\vspace{-0.15cm}
With the algorithm in the previous subsection, this implies
Theorem~\ref{thm:popov}. In the specific case of Hermite form computation, for
which there is a deterministic algorithm with cost bound
$\softO{\rdim^\expmatmul \deg(\mat{A})}$ \cite{ZhoLab16}, one can verify that
this leads to a \emph{deterministic} algorithm using $\softO{\rdim^\expmatmul
\lceil \degDet / \rdim \rceil}$ operations. (However, for $\shifts=\unishift$
this does not give a $\softO{\rdim^\expmatmul \lceil \degDet / \rdim \rceil}$
\emph{deterministic} algorithm for the Popov form
using~\cite{GuSaStVa12,SarSto11}, since the corresponding $\shifts[u]$ is
$(\unishift,t,\ldots,t)$ with $t \ge \deg(\mat{A})$.)

\vspace{-0.15cm}
\begin{dfn}[Column partial linearization]
  \label{dfn:parlin}
  ~\\ Let $\mat{A} \in \polMatSpace[\rdim]$ and $\degLins = (\degLin_i)_i \in
  \NN^\rdim$. Then let $\degExp = 1 + \lfloor (\degLin_1 + \cdots +
  \degLin_\rdim) / \rdim \rfloor$, let $\quoExp_i \ge 1$ and $0 \le \remExp_i <
  \degExp$ be such that $\degLin_i = (\quoExp_i -1) \degExp + \remExp_i$ for $1
  \le i \le \rdim$, let $\expand{\rdim} = \quoExp_1 + \cdots + \quoExp_\rdim$,
  and let $\expandMat = \trsp{[\idMat[] | \trsp{\mat{E}}]} \in
  \polMatSpace[\expand{\rdim}][\rdim]$ be the expansion-compression matrix with
  $\idMat[]$ the identity matrix and
  \begin{equation}
    \label{eqn:expandMatbis}
    \mat{E} = 
    \left[\begin{smallmatrix}
      X^\degExp \\
      \vdots \\
      X^{(\quoExp_1-1)\degExp} \\
      & \ddots \\
      & & X^\degExp \\
      & & \vdots \\
      & & X^{(\quoExp_\nbun-1)\degExp}
    \end{smallmatrix}\right].
  \end{equation}
  The \emph{column partial linearization
    $\colParLin{\minDegs}{\mat{A}} \in \polMatSpace[\expand{\rdim}]$ of
    $\mat{A}$} is defined as follows:
    \vspace{-0.2cm}
  \begin{itemize}
  \setlength\itemsep{0cm}
    \item the first $\rdim$ rows of $\colParLin{\minDegs}{\mat{A}}$ form
      the unique matrix $\expand{\mat{A}} \in
      \polMatSpace[\rdim][\expand{\rdim}]$ such that $\mat{A} =
      \expand{\mat{A}} \expandMat$ and $\expand{\mat{A}}$ has all columns of
      degree less than $\degExp$ except possibly those at indices $\rdim +
      (\quoExp_1-1) + \cdots + (\quoExp_i-1)$ for $1\le i \le \rdim$,
    \item for $1\le i \le \rdim$, the row $\rdim + (\quoExp_1-1) + \cdots +
      (\quoExp_{i-1}-1) +1$ of $\colParLin{\minDegs}{\mat{A}}$ is $[ 0, \ldots,
      0, -X^\degExp, 0, \ldots, 0, 1, 0, \ldots, 0]$ where
      $-X^\degExp$ is at index $i$ and $1$ is on the diagonal,
    \item for $1\le i \le \rdim$ and $2 \le j \le \quoExp_i-1$, the row
      $\rdim + (\quoExp_1-1) + \cdots + (\quoExp_{i-1}-1) + j$ of
      $\colParLin{\minDegs}{\mat{A}}$ is $[ 0, \ldots, 0, -X^\degExp, 1, 0,
      \ldots, 0]$ where $1$ is on the diagonal.
  \end{itemize}
\end{dfn}

\vspace{-0.1cm}
Defining the \emph{row partial linearization} $\rowParLin{\minDegs}{\mat{A}}$
of $\mat{A}$ similarly, both linearizations are related by
$\rowParLin{\minDegs}{\mat{A}} = \trsp{\colParLin{\minDegs}{\trsp{\mat{A}}}}$.

Now we show that for a well-chosen $\shifts[u]$, one can directly read the
$\shifts$-Popov form of $\mat{A}$ as a submatrix of the $\shifts[u]$-Popov form
of $\rowParLin{\minDegs}{\mat{A}}$ (resp. $\colParLin{\minDegs}{\mat{A}}$).

\vspace{-0.1cm}
\begin{lem}
  \label{lem:parlin_reduction}
  Let $\mat{A} \in \polMatSpace[\rdim]$ be nonsingular, $\shifts
  \in~\shiftSpace$, $\popov$ be the $\shifts$-Popov form of $\mat{A}$, and
  $\minDegs \in \NN^\rdim$. We have that:
  \vspace{-0.15cm}
  \begin{enumerate}[(i)]
    \setlength\itemsep{0cm}
    \item if $\expand{\rdim}$ is the dimension of
      $\rowParLin{\minDegs}{\mat{A}}$ and $\shifts[u] = (\shifts,t,\ldots,t)$
      is in $\shiftSpace[\expand{\rdim}]$ with $t \ge
      \max(\shifts)+\deg(\popov)$, then the $\shifts[u]$-Popov form of
      $\rowParLin{\minDegs}{\mat{A}}$ is $\begin{bmatrix} \popov & \mathbf{0}
        \\ \anyMat & \idMat[] \end{bmatrix}$;
    \item if $\expand{\rdim}$ is the dimension of
      $\colParLin{\minDegs}{\mat{A}}$, $\mat{E}$ is as
      in~\eqref{eqn:expandMatbis}, and $\shifts[u] = (\shifts,\shifts[t]) \in
      \shiftSpace[\expand{\rdim}]$ for any $\shifts[t] \in
      \shiftSpace[\expand{\rdim}-\rdim]$, then the $\shifts[u]$-Popov form of
      $\colParLin{\minDegs}{\mat{A}}
        \begin{bmatrix}
          \idMat[] & \mat{0} \\
          \mat{E} & \idMat[]
        \end{bmatrix}$
      is
      $\begin{bmatrix}
        \popov & \mat{0} \\
        \mat{0} & \idMat[]
      \end{bmatrix}$;
    \item if $\expand{\rdim}$ is the dimension of
      $\colParLin{\minDegs}{\mat{A}}$ and $\shifts[u] = (\shifts,t,\ldots,t)$
      is in $\shiftSpace[\expand{\rdim}]$ with $t \ge
      \max(\shifts)+\deg(\popov)$, then the $\shifts[u]$-Popov form of
      $\colParLin{\minDegs}{\mat{A}}$ is $\begin{bmatrix} \popov & \mathbf{0}
        \\ \anyMat & \idMat[] \end{bmatrix}$.
  \end{enumerate}
\end{lem}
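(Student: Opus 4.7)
The plan is to prove (ii) by direct block computation, derive (iii) as a consequence, and handle (i) by a dual argument. In each case I exhibit a matrix with the claimed block structure, show it is left-unimodularly equivalent to the input, and verify the $\shifts[u]$-Popov form conditions.

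For (ii): Write $\colParLin{\minDegs}{\mat{A}} = \left[\begin{smallmatrix} \expand{\mat{A}}_1 & \expand{\mat{A}}_2 \\ \mat{L}_1 & \mat{L}_2 \end{smallmatrix}\right]$ where the top-row block $[\expand{\mat{A}}_1 \mid \expand{\mat{A}}_2]$ is $\expand{\mat{A}}$ (so $\mat{A} = \expand{\mat{A}}_1 + \expand{\mat{A}}_2 \mat{E}$) and the bottom-row block $[\mat{L}_1 \mid \mat{L}_2]$ collects the linking rows of Definition~\ref{dfn:parlin}. A row-by-row check, tracking each pair $(-X^\degExp,1)$ in a linking row against the entries $X^{(k-1)\degExp}, X^{k\degExp}$ of the corresponding column of $\mat{E}$, yields $\mat{L}_1 + \mat{L}_2 \mat{E} = \mat{0}$. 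Hence $\colParLin{\minDegs}{\mat{A}} \left[\begin{smallmatrix} \idMat[] & \mat{0} \\ \mat{E} & \idMat[] \end{smallmatrix}\right] = \left[\begin{smallmatrix} \mat{A} & \expand{\mat{A}}_2 \\ \mat{0} & \mat{L}_2 \end{smallmatrix}\right]$. The matrix $\mat{L}_2$ is lower unitriangular and hence unimodular over $\polRing$; solving $\mat{X} \mat{L}_2 = \expand{\mat{A}}_2$ and subtracting $\mat{X}$ times the bottom block from the top reduces the product to the block-diagonal $\left[\begin{smallmatrix} \mat{A} & \mat{0} \\ \mat{0} & \mat{L}_2 \end{smallmatrix}\right]$, whose $\shifts[u]$-Popov form is $\left[\begin{smallmatrix} \popov & \mat{0} \\ \mat{0} & \idMat[] \end{smallmatrix}\right]$ (the bottom block being $\idMat[]$ by unimodularity).

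For (iii): Combining (ii) with the inverse column transformation shows $\colParLin{\minDegs}{\mat{A}}$ is left-unimodularly equivalent to $\left[\begin{smallmatrix} \popov & \mat{0} \\ -\mat{E} & \idMat[] \end{smallmatrix}\right]$. Further unimodular row operations reduce each bottom-left entry modulo the pivot $\popov_{j,j}$ of column $j$ of $\popov$, producing $\left[\begin{smallmatrix} \popov & \mat{0} \\ \anyMat & \idMat[] \end{smallmatrix}\right]$ with $\deg(\anyMat_{i,j}) < \deg(\popov_{j,j})$. To check this is in $\shifts[u]$-Popov form: the top rows inherit the pivot structure of $\popov$ at columns $1,\ldots,\rdim$ (the top-right block being zero); in each bottom row, the first $\rdim$ entries now have degree less than $\deg(\popov)$, so the bound $t \ge \max(\shifts)+\deg(\popov)$ places the $\shifts[u]$-pivot in the $\idMat[]$ block at column $\rdim+i$ with pivot degree $0$, and the non-pivot degree conditions are automatic from the modular reduction.

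For (i): Use the relation $\rowParLin{\minDegs}{\mat{A}} = \trsp{\colParLin{\minDegs}{\trsp{\mat{A}}}}$. Let $\widetilde{\mat{V}} = \left[\begin{smallmatrix} \idMat[] & \trsp{\mat{E}} \\ \mat{0} & \idMat[] \end{smallmatrix}\right]$; it is unit upper triangular hence unimodular, and its first $\rdim$ rows form $\trsp{\expandMat}$. Transposing the cancellation identity from (ii) gives $\widetilde{\mat{V}} \rowParLin{\minDegs}{\mat{A}} = \left[\begin{smallmatrix} \mat{A} & \mat{0} \\ \mat{B}_2 & \trsp{\mat{L}_2} \end{smallmatrix}\right]$ for some $\mat{B}_2$. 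Since $\trsp{\mat{L}_2}$ is unimodular, further row operations bring the bottom-right block to $\idMat[]$ and the top-left to $\popov$, and reduction modulo the pivots of $\popov$ then yields $\left[\begin{smallmatrix} \popov & \mat{0} \\ \anyMat & \idMat[] \end{smallmatrix}\right]$ in $\shifts[u]$-Popov form, with verification identical to (iii). The main technical obstacle across all three parts is the cancellation identity $\mat{L}_1 + \mat{L}_2 \mat{E} = \mat{0}$, which requires careful bookkeeping of the linking rows in the partial linearization; once it is established, the remainder is standard manipulation of shifted Popov forms.
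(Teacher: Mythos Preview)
Your proof is correct and follows essentially the same approach as the paper: for (ii) you establish the same block identity $\colParLin{\minDegs}{\mat{A}} \left[\begin{smallmatrix} \idMat[] & \mat{0} \\ \mat{E} & \idMat[] \end{smallmatrix}\right] = \left[\begin{smallmatrix} \mat{A} & \anyMat \\ \mat{0} & \mat{T} \end{smallmatrix}\right]$ (the paper leaves your cancellation $\mat{L}_1 + \mat{L}_2\mat{E} = \mat{0}$ implicit, simply writing ``by construction''), and (iii) is derived from (ii) identically via the inverse column transformation and reduction modulo $\popov$. For (i) the paper instead invokes \cite[Theorem~10(i)]{GuSaStVa12} as a black box to obtain that $\rowParLin{\minDegs}{\mat{A}}$ is left-unimodularly equivalent to $\left[\begin{smallmatrix} \mat{A} & \mat{0} \\ \mat{B} & \idMat[] \end{smallmatrix}\right]$, whereas you reach the analogous block form by transposing the (ii) computation; this is a pleasant self-contained variant but not a different argument in substance.
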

\begin{proof}
  $(i)$
  $\rowParLin{\minDegs}{\mat{A}}$ is left-unimodularly equivalent to 
  $[\begin{smallmatrix}
    \mat{A} & \mathbf{0} \\
    \mat{B} & \idMat[]
  \end{smallmatrix}]$
  for some $\mat{B} \in \polMatSpace[(\expand{\rdim}-\rdim)][\rdim]$
  \cite[Theorem 10 (i)]{GuSaStVa12}.
  Then, let $\mat{R}$ be the remainder of $\mat{B}$ modulo $\popov$, that is,
  the unique matrix in $\polMatSpace[(\expand{\rdim}-\rdim)][\rdim]$ which has
  column degree bounded by the column degree of $\popov$
  componentwise and such that $\mat{R} = \mat{B} + \mat{Q}\popov$ for some
  matrix $\mat{Q}$ (see for example~\cite[Theorem~6.3-15]{Kailath80}, noting
  that $\popov$ is $\unishift$-column reduced).

  Let $\mat{W}$ denote the unimodular matrix such that $\popov = \mat{W}
  \mat{A}$. Then, 
  $[\begin{smallmatrix}
    \mat{W} & \mathbf{0} \\
    \mat{Q}\mat{W}  & \idMat[]
  \end{smallmatrix}]
  [\begin{smallmatrix}
    \mat{A} & \mathbf{0} \\
    \mat{B} & \idMat[]
  \end{smallmatrix}]
  =
  [\begin{smallmatrix}
    \popov & \mathbf{0} \\
    \mat{R} & \idMat[]
  \end{smallmatrix}]$
  is left-unimodularly equivalent to $\rowParLin{\minDegs}{\mat{A}}$.
  Besides, since
  $\deg(\mat{R}) < \deg(\popov)$, we have that
  $[\begin{smallmatrix}
    \popov & \mathbf{0} \\
    \mat{R} & \idMat[]
  \end{smallmatrix}]$
  is in $\shifts[u]$-Popov form by choice of $t$.

  $(ii)$
  The matrix
  $[\begin{smallmatrix}
    \popov & \mat{0} \\
    \mat{0} & \idMat[]
  \end{smallmatrix}]$
  is obviously in $\shifts[u]$-Popov form: it remains to prove that it is
  left-unimodularly equivalent to 
  $\colParLin{\minDegs}{\mat{A}}
  [\begin{smallmatrix}
    \idMat[] & \mat{0} \\
    \mat{E} & \idMat[]
  \end{smallmatrix}]$.
  Let $\mat{T}$ denote the trailing principal submatrix $\mat{T} =
  \colParLin{\minDegs}{\mat{A}}_{\rdim+1 \dots \expand{\rdim},\rdim+1 \dots
  \expand{\rdim}}$, and let $\mat{W}$ be the unimodular matrix such that
  $\mat{W} \popov = \mat{A}$. Then, $\mat{T}$ is unit lower triangular, thus
  unimodular, and by construction of $\colParLin{\minDegs}{\mat{A}}$, for some
  matrix $\mat{B}$ we have $\colParLin{\minDegs}{\mat{A}}
  [\begin{smallmatrix}
    \idMat[] & \mat{0} \\
    \mat{E} & \idMat[]
  \end{smallmatrix}]
  =
  [\begin{smallmatrix}
    \mat{A} & \mat{B} \\
    \mat{0} & \mat{T}
  \end{smallmatrix}]
  =
  [\begin{smallmatrix}
    \mat{W} & \mat{B} \\
    \mat{0} & \mat{T}
  \end{smallmatrix}]
  [\begin{smallmatrix}
    \popov & \mat{0} \\
    \mat{0} & \idMat[]
  \end{smallmatrix}]
  $.

  $(iii)$
  From $(ii)$, $\colParLin{\minDegs}{\mat{A}}$ is left-unimodularly equivalent
  to
  $[\begin{smallmatrix}
    \popov & \mat{0} \\
    \mat{0} & \idMat[]
  \end{smallmatrix}]
  [\begin{smallmatrix}
    \idMat[] & \mat{0} \\
    -\mat{E} & \idMat[]
  \end{smallmatrix}]
  =
  [\begin{smallmatrix}
    \popov & \mat{0} \\
    -\mat{E} & \idMat[]
  \end{smallmatrix}]
  $. Using arguments in the proof of $(i)$ above, by choice of $t$ the
  $\shifts[u]$-Popov form of
  $[\begin{smallmatrix}
    \popov & \mat{0} \\
    -\mat{E} & \idMat[]
  \end{smallmatrix}]$
  is 
  $[\begin{smallmatrix}
    \popov & \mat{0} \\
    \mat{R} & \idMat[]
  \end{smallmatrix}]$
  with $\mat{R}$ the remainder of $-\mat{E}$ modulo $\popov$.
\end{proof}
\vspace{-0.15cm}
In the usual case where $\deg(\popov)$ is not known \emph{a priori}, one may
choose $t$ using the inequality $\deg(\popov) \le \deg(\det(\popov)) =
\deg(\det(\mat{A})) \le \rdim \deg(\mat{A})$. 

This result implies Proposition~\ref{prop:parlin_reduction}
thanks to the following remark from~\cite{GuSaStVa12}. Let $\pi_1,\pi_2$ be
permutation matrices such that $\mat{B} = \pi_1\mat{A}\pi_2 = [b_{i,j}]_{ij}$
satisfies
$\deg(b_{i,i}) \ge \deg(b_{j,k})$ for all $j,k\ge i$ and $1\le i\le \rdim$.
Defining $\shifts[d] = (d_i)_i \in \NN^\rdim$ by
$d_{i} = \overline{\deg}(b_{i,i}) = \left\{ \begin{array}{ll}
  \deg(b_{i,i}) & \text{ if } b_{i,i} \neq 0 \\
  0  & \text{ otherwise}
\end{array} \right.$,
we have $d_1 + \cdots + d_\rdim \le \degDet$ by definition of $\degDet$
in~\eqref{eqn:degDet}. Let $\minDegs = \pi_1^{-1} \shifts[d]$, where
$\shifts[d]$ is seen as a column vector, and $\boldsymbol\gamma =
\cdeg{\rowParLin{\minDegs}{\mat{A}}}$. Then the matrix $\expand{\mat{A}} =
\colParLin{\boldsymbol\gamma}{\rowParLin{\minDegs}{\mat{A}}}$ is
$\expand{\rdim} \times \expand{\rdim}$ with $\expand{\rdim} < 3\rdim$, and we
have $\deg(\expand{\mat{A}}) \le \lceil \degDet/\rdim \rceil$ \cite[Corollary
3]{GuSaStVa12}.  Lemma~\ref{lem:parlin_reduction} further shows that the
$\shifts$-Popov form of $\mat{A}$ is the principal $\rdim \times \rdim$
submatrix of the $\shifts[u]$-Popov form of $\expand{\mat{A}}$, for the shift
$\shifts[u] = (\shifts,t,\ldots,t) \in \shiftSpace[\expand{\rdim}]$ with $t =
\max(\shifts) + \rdim \deg(\mat{A})$.

\medskip \textbf{Acknowledgements.} The author sincerely thanks the anonymous
reviewers for their careful reading and detailed comments, which were very
helpful for preparing the final version of this paper. He also thanks C.-P.
Jeannerod, G. Labahn, \'E. Schost, A. Storjohann, and G. Villard for their
interesting and useful comments. The author gratefully acknowledges financial
support provided through the international mobility grants \emph{Explo'ra Doc
from R\'egion Rh\^one-Alpes}, \emph{PALSE}, and \emph{Mitacs Globalink -
Inria}.

\begin{tiny}

\end{tiny}

\end{document}